\definecolor{gray}{rgb}{0.5,0.5,0.5}
\newtheorem{theorem}{Theorem}
\def\IEEElabelanchoreqn#1{\bgroup
	\def\@currentlabel{\p@equation\theequation}\relax
	\def\@currentHref{\@IEEEtheHrefequation}\label{#1}\relax
	\Hy@raisedlink{\hyper@anchorstart{\@currentHref}}\relax
	\Hy@raisedlink{\hyper@anchorend}\egroup}
\newcommand{\subnumberinglabel}[1]{\IEEEyesnumber
	\IEEEyessubnumber*\IEEElabelanchoreqn{#1}}
\begin{document}

\title{An Information Geometry Interpretation for Approximate Message Passing}

\author{Bingyan Liu, An-An Lu, \IEEEmembership{Member, IEEE}, Mingrui Fan, Jiyuan Yang, \IEEEmembership{Member, IEEE} and Xiqi Gao, \IEEEmembership{Fellow, IEEE}

\thanks{B. Liu was with the National Mobile Communications Research Laboratory (NCRL), Southeast University, Nanjing, 210096 China, and is now also with ASR Microelectronics Co., Ltd, Shanghai, China, e-mail: byliu@seu.edu.cn.} 

\thanks{A.-A. Lu, M. Fan, J. Yuan and X. Q. Gao are with the National Mobile Communications Research Laboratory (NCRL), Southeast University,
Nanjing, 210096 China, and also with Purple Mountain Laboratories, Nanjing 211111, China, e-mail: aalu@seu.edu.cn, xqgao@seu.edu.cn.} 
}


\maketitle

\begin{abstract}
	In this paper, we propose an information geometry (IG) framework to solve the 
	standard linear regression problem. 
	The proposed framework is an extension of the one for computing the mean of complex multivariate Gaussian distribution. By applying the proposed framework, 	
	the information geometry approach (IGA) and the approximate information geometry approach (AIGA) for basis pursuit de-noising (BPDN) in standard linear regression are derived. The framework can also be applied to other standard linear regression problems. 
	With the transformations of natural and expectation parameters of Gaussian distributions, we then show the relationship between the IGA and the message passing (MP) algorithm. Finally, we prove that the AIGA is equivalent to the approximate message passing (AMP) algorithm.  
	These intrinsic results offer a new perspective for the AMP algorithm, and clues for understanding and improving stochastic reasoning methods.
\end{abstract} 
\begin{IEEEkeywords}
Standard linear regression, information geometry (IG), approximate message passing (AMP).
\end{IEEEkeywords}

%
\IEEEpeerreviewmaketitle

\newpage
\section{Introduction}
Information geometry \cite{amari2000methods, amari2016information, nielsen2020elementary} arises from the study of invariant geometric structures of families of probability distributions. It explores information sciences through modern geometric methods, and
views the space of probability distributions as differentiable manifolds.
By defining the Riemannian metric on them, it applies differential geometric tools to characterize and study geometric structures such as distance and curvature in the Riemannian space of probability distributions. Information geometry has a beautiful mathematical structure and provides insight into statistics and its applications \cite{ay2017information}. It is still a very active field, and its area of applications includes machine learning, signal processing, optimization theory \cite{amari2010information, raskutti2015information}, optimal transport theory \cite{li2022transport}, integrated information theory \cite{oizumi2016unified},  neuroscience, geometric mechanics \cite{leok2017connecting} and quantum physics \cite{jarzyna2020geometric}.

Information geometry has the potential to provide a unified framework for statistical inference methods. 
In \cite{ikeda2004stochastic}, an information geometry interpretation of the belief propagation (BP) algorithm and the concave-convex procedure (CCCP) algorithm \cite{yuille2002cccp}, proposed based on minimizing the Bethe free energy \cite{yedidia2001bethe}, for solving the marginal distribution of the \textit{a posteriori} distribution is given. 
In \cite{Ikeda2004}, the decoding algorithms for Turbo codes and low-sensitivity parity check (LDPC) codes are derived from the viewpoint of information geometry, and the equilibrium and error of the algorithms are analyzed. Recently, information geometry has been used in wireless communications to derive iterative information geometry algorithms (IGAs)  
for massive MIMO channel estimation \cite{Yang2022, yang2023channel} and ultra-massive MIMO signal detection \cite{yang2024signal}. Moreover, an information geometry framework and a new interference cancellation information geometry algorithm (IC-IGA) for computing the mean of a complex multivariate Gaussian distribution is proposed in \cite{lu2024ICIGA}.

Meanwhile, classical statistical inference methods such as belief propagation (BP) \cite{Pearl1988} and message passing (MP) are developed from graphical models.
To cover more scenarios and solve more general problems, approximate message passing (AMP) \cite{Donoho2010}, expectation propagation (EP) \cite{minka2013expectation}, generalized approximate message passing (GAMP) \cite{Rangan2011}, and Bethe free energy minimization \cite{Zhang2021} are evolved from them. These algorithms have a wide range of applications in various areas. The equivalence of GAMP, EP and Bethe free energy minimization has been shown in \cite{Thomas2023, Zhang2021}.
To enhance the convergence behavior of the AMP, various variants of AMP such as orthogonal AMP (OAMP)\cite{Ma2017, Liul2023}, convolutional AMP (CAMP)\cite{Takeuchi2021}, memory AMP (MAMP) \cite{Liu2021} and vector AMP (VAMP)\cite{Rangan2017Vector} are proposed in the literature.

As we have shown, information geometry assists in a deeper understanding of statistical inference problems and might be able to improve the algorithms due to its intuitive geometric understanding. It seems natural to provide an information geometry framework for the AMP algorithm as well. However, such work has not been provided and also can not be easily extended from those in the literature. To fill the gap, we provide an information geometry interpretation of the AMP algorithm in this paper. To achieve this goal, we first propose a general framework for the standard linear regression problem with non-Gaussian prior. We then derive the IGA  and the AIGA based on the framework. Finally, we show the equivalence between the AIGA and the AMP algorithm. 

The rest of this paper is organized as follows. 
The problem formulation and information geometry framework are given in Section \ref{sec:ProIGf}.
The information geometry approach for BPDN is proposed in Section \ref{sec:IGA}. 
The approximate information geometry approach for BPDN is proposed in Section \ref{sec:AIGA}. 
The relationship between the IGA and the MP algorithm, as well as the proof of equivalence of the AIGA and the AMP algorithm are presented in Section \ref{sec:AMPeqIGA}. 
Simulation results are provided in Section \ref{sec:Sim}. 
The conclusion is drawn in Section \ref{sec:Con}.

{\it Notations}: Throughout this paper, uppercase and lowercase boldface letters are used to represent matrices and vectors. The superscripts $(\cdot)^*$ and $(\cdot)^T$  denote the conjugate and transpose operations. Operators ${\mathbb E}\{\cdot\}$ and ${\mathbb V}\{\cdot\}$ denote the mathematical expectation and variance, respectively. 
The operator $\text{det}(\cdot)$ represents the matrix determinant. The $\ell_1$ and $\ell_2$ norms are denoted by $\|\cdot\|_1$ and $\|\cdot\|_2$, respectively. The operator $\odot$ denotes the Hadamard product. The $N \times N$ identity matrix is denoted by $\mathbf I_N$, where the subscript $N$ is sometimes omitted for convenience.
We use $h_n$ or $[\mathbf h]_n$, $a_{mn}$ or $[\mathbf{A}]_{mn}$, $[\mathbf{A}]_{:,n}$ and $[\mathbf{A}]_{m,:}$ to denote the $n$-th element of the vector $\mathbf h$, the $(m,n)$-th element of the matrix $\mathbf{A}$, the $n$-th column and the $m$-th row of matrix $\mathbf{A}$.

\section{Problem Formulation and Information Geometry Framework}
\label{sec:ProIGf}

\subsection{Problem Formulation}
\label{sec:ProFor}
Consider a standard linear regression problem on sparse signal recovery. The receive model for the problem is given as
\begin{IEEEeqnarray}{Cl}\label{reModel}
	\mathbf y= \mathbf A\mathbf h+ \mathbf z
\end{IEEEeqnarray}
where 
$\mathbf y \in \mathbb R^{M \times 1}$ is the observation, 
$\mathbf A \in \mathbb R^{M \times N}(M<N)$ is the measurement matrix, 
$\mathbf h \in \mathbb R^{N \times 1}$ is the sparse signal to be recovered, 
$\mathbf z \in \mathbb R^{M \times 1}$ is the additive white Gaussian noise with zero mean and $\sigma_z^2$ covariance. 
When the \textit{a priori} statistics of $\mathbf h$ and noise power are both unknown, this sparse signal recovery problem is a classical compressed sensing problem, also known as the basis pursuit de-noising (BPDN) or the least absolute shrinkage and selection operator (LASSO) inference problem. 
The problem is then modeled as
\begin{IEEEeqnarray}{Cl}\label{eq:minNorm}
	\min_{\mathbf h} \left\{ \kappa\|\mathbf h\|_1 + \frac{1}{2}\|\mathbf y-\mathbf A\mathbf h\|_2^2 \right\}
\end{IEEEeqnarray}
where $\kappa > 0$ is a regularization parameter that balances the sparsity and the estimation error.
Further, this problem can be transformed into
\begin{IEEEeqnarray}{Cl}\label{eq:expmax}
	\max_{\mathbf h} \left\{\exp\left\{  - \beta \left(\kappa \|\mathbf h\|_1 
	+ \frac{1}{2}\|\mathbf y-\mathbf A\mathbf h\|_2^2\right) \right\} \right\}
\end{IEEEeqnarray}
where $\beta>0$. According to \cite{Donoho2010}, when $\beta \rightarrow \infty$, problems \eqref{eq:minNorm} and \eqref{eq:expmax} have the same solution.

The exponential form constructed in \eqref{eq:expmax} is similar to the PDF of an exponential family distribution. 
Let the postulated \textit{a priori} PDF of $\mathbf h$ be defined by
\begin{IEEEeqnarray}{rl}
	p(\mathbf h) 
	&\propto \exp\left\{-\beta\kappa\|\mathbf h\|_1\right\} 
\end{IEEEeqnarray}
and the conditional PDF of $\mathbf y$ given $\mathbf h$ be defined by 
\begin{IEEEeqnarray}{rl}
	p(\mathbf y|\mathbf h) 
	&\propto \exp\left\{- \frac{\beta}{2}\|\mathbf y-\mathbf A\mathbf h\|_2^2\right\}.
\end{IEEEeqnarray}
By using Bayes' theorem, we can construct the postulated \textit{a posteriori} PDF of $\mathbf h$ as
\begin{IEEEeqnarray}{rl}
	p(\mathbf h|\mathbf y) 
	&\propto p(\mathbf h) p(\mathbf y|\mathbf h) \notag\\
	&\propto \exp\left\{-\beta\kappa\|\mathbf h\|_1 - \frac{\beta}{2}(\mathbf y-\mathbf A\mathbf h)^T(\mathbf y-\mathbf A\mathbf h)\right\}.
	\IEEEeqnarraynumspace
\end{IEEEeqnarray}
In the above formula, the \textit{a priori} distribution is a Laplace distribution. For other more general \textit{a priori} distributions, 
$-\beta\kappa\|\mathbf h\|_1$ becomes $d(\mathbf h)$, where $d(\mathbf h) = \sum_n d_n(h_n)$  and its specific form depends on specific problems. We rewrite the \textit{a posteriori} PDF of $\mathbf h$ as
\begin{IEEEeqnarray}{Cl}
	p( \mathbf h| \mathbf y)  
	&\propto \exp\left\{ d(\mathbf h)
	- \frac{\alpha}{2}\mathbf h^T\mathbf A^T\mathbf A\mathbf h 
	+ {\alpha}\mathbf h^T\mathbf A^T\mathbf y \right\}  
\end{IEEEeqnarray}
for the general problem. The object is to find the mean of the distribution $p(\mathbf h|\mathbf y)$.
In this paper, we solve this problem with information geometry theory and illustrate the relation between the proposed information geometry algorithms with the MP algorithms.

\subsection{Information Geometry Framework}
\label{sec:IGAproce}
In this subsection, we present the information geometry framework for standard linear regression.
The object is to find an approximated mean of the \textit{a posteriori} distribution with an iterative method.
The framework is an extension of the work in \cite{lu2024ICIGA}, which is inspired by the explanation of the BP algorithm with information geometry in \cite{amari2016information}.

The \textit{a posteriori} PDF can be rewritten as
\begin{IEEEeqnarray}{Cl}\label{eq:expost}
	p_d( \mathbf h;  \hat{\bm\theta}_{or}, \hat{\bm\Theta}_{or}) 
	\propto 
	\exp\left\{ d(\mathbf h)
	+ \mathbf h^T \hat{\bm\Theta}_{or}\mathbf h
	+ \mathbf h^T \hat{\bm\theta}_{or}
	\right\}  
\end{IEEEeqnarray}
where $\hat{\bm\theta}_{or} = {\alpha}\mathbf{A}^T\mathbf y$ and $\hat{\bm\Theta}_{or} = - \frac{\alpha}{2} \mathbf A^T\mathbf A$. 

The information geometry framework in \cite{lu2024ICIGA} is proposed for computing the mean of a complex multivariate Gaussian distribution. However, it can not be applied directly to the distribution in \eqref{eq:expost}, since the \textit{a priori} distribution is not Gaussian. Thus, we propose a general framework for this case as summarized below.

\begin{compactenum}
	\item[\quad (1)] The parameters $\hat{\bm\theta}_{or}$ and $\hat{\bm\Theta}_{or}$ are split to construct $Q$ auxiliary manifolds of PDFs.
	The \textit{a priori} term $d(\mathbf h)$ is used to construct an extra auxiliary manifold of PDFs, and one target manifold of PDFs is also constructed;
	\item[\quad (2)]Initialize the parameters for auxiliary points and the target point in auxiliary manifolds and the target manifold with the $e$-condition;  
	\item[\quad (3)]Compute the beliefs of the auxiliary points based on the $m$-projection;
	\item[\quad (4)]Update the natural parameters of the auxiliary and target points with the beliefs and a condition that the extra auxiliary point and the target point have the same mean and variance;
	\item[\quad (5)]Repeat (3) and (4) until the algorithm converges or fixed iterations.
	Output the mean and variance of the target point.
\end{compactenum}
We give the detailed framework in the following.

\subsubsection{Auxiliary manifolds and the $e$-condition}

The first step of the information geometry framework is the split of $\hat{\bm\theta}_{or}$ and $\hat{\bm\Theta}_{or}$.
By splitting $\hat{\bm\theta}_{or}$ and $\hat{\bm\Theta}_{or}$ into $Q$ items, we have
\begin{IEEEeqnarray}{Cl}
	\subnumberinglabel{eq:IGsplit}
	\hat{\bm\theta}_{or} = {\alpha} \mathbf{A}^T\mathbf y 
	=  \sum_{q=1}^Q \mathbf{b}_q \\
	\hat{\bm\Theta}_{or} = - \frac{\alpha}{2} \mathbf{A}^T\mathbf{A}
	= - \sum_{q=1}^Q \mathbf{C}_q 
\end{IEEEeqnarray}
where the setting of $\mathbf{b}_q$, $\mathbf{C}_q$ and $Q$ depends on specific algorithms.

Based on the splits, we define $Q$ auxiliary PDFs.
The $q$-th auxiliary PDF is defined by
\begin{IEEEeqnarray}{Cl}\label{eq:auxpost}
	&\quad p( \mathbf h;  {\bm\theta}_q, {\bm\Theta}_q) \propto 
	\exp\left\{- \mathbf h^T({\bm\Lambda}_q + \mathbf C_q)\mathbf h
	+ \mathbf h^T({\bm\lambda}_q + \mathbf b_q)
	\right\}  
\end{IEEEeqnarray}
where $\bm\Lambda_q$ is diagonal and 
\begin{IEEEeqnarray}{Cl}
	\subnumberinglabel{eq:tThetaq}
	{\bm\theta}_q &= {\bm\lambda}_q+\mathbf b_q
	\label{eq:thetaq}\\
	{\bm\Theta}_q &= - ({\bm\Lambda}_q + \mathbf C_q)
	\label{eq:Thetaq}
\end{IEEEeqnarray}
 are natural parameters of $p( \mathbf h;  {\bm\theta}_q, {\bm\Theta}_q)$. 
We define the $q$-th auxiliary manifold as 
\begin{IEEEeqnarray}{Cl}
	{\mathcal M}_q
	= \left\{ p( \mathbf h;  {\bm\theta}_q, {\bm\Theta}_q)  \right\}.
\end{IEEEeqnarray}
As illustrated in \cite{lu2024ICIGA}, ${\mathcal M}_q$ is a submanifold of the dually flat manifold $\cal{M}$ of Gaussian distributions, and the PDF $p( \mathbf h;  {\bm\theta}_q, {\bm\Theta}_q)$
can be viewed as a point with the affine coordinate $({\bm\theta}_q, {\bm\Theta}_q)$ in ${\mathcal M}_q$.  The affine coordinate is called the $e$-coordinate.
Because the natural parameters are linearly constrained in $e$-coordinate, ${\mathcal M}_q$ is an $e$-flat submanifold.

To approximate the term $d(\mathbf h)$, an extra auxiliary PDF  $p(\mathbf h;  \hat{\bm\theta}_0, \hat{\bm\Theta}_0)$ is defined as 
\begin{IEEEeqnarray}{Cl}
	\label{eq:tarPDF}
	p_d( \mathbf h;  \hat{\bm\theta}_0, \hat{\bm\Theta}_0)
	\propto \exp\left\{ d(\mathbf h)
	- \mathbf h^T\hat{\bm\Lambda}_0 \mathbf h 
	+  \mathbf h^T\hat{\bm\lambda}_0 \right\}  
	\IEEEeqnarraynumspace
\end{IEEEeqnarray}
where $\hat{\bm\theta}_0, \hat{\bm\Theta}_0$ are defined as 
\begin{IEEEeqnarray}{ClCl}
	\hat{\bm\theta}_0 &=  \hat{\bm\lambda}_0 , \quad
	& \hat{\bm\Theta}_0 &= -  \hat{\bm\Lambda}_0.
\end{IEEEeqnarray}
The corresponding auxiliary manifold is defined as
\begin{IEEEeqnarray}{Cl}
	\hat{\mathcal M}_0
	= \left\{ p_d( \mathbf h;  \hat{\bm\theta}_0, \hat{\bm\Theta}_0)  \right\} 
\end{IEEEeqnarray}
which does not belong to the dually flat manifold $\cal{M}$.

The object of the problem is the marginal distribution of $\mathbf h$, which is approximated with the target point in the information geometry framework.
The target PDF $p( \mathbf h; \bm\theta_0 , \bm\Theta_0)$ is defined as
\begin{IEEEeqnarray}{Cl}
	\label{eq:tarpoint}
	p(\mathbf h; \bm\theta_0 , \bm\Theta_0) 
	\propto \exp\left\{
	- \mathbf h^T{\bm\Lambda}_0 \mathbf h 
	+  \mathbf h^T{\bm\lambda}_0 \right\}  
\end{IEEEeqnarray}
where ${\bm\Lambda}_0$ is diagonal, which means $p( \mathbf h; \bm\theta_0 , \bm\Theta_0)$ contains no interaction item, thus $h_i$ and $h_j$ are independent for $i\neq j$. The natural parameters of $p( \mathbf h; \bm\theta_0 , \bm\Theta_0) \in \mathcal M_0$ are
\begin{IEEEeqnarray}{ClCl}
	\bm\theta_0 &=  {\bm\lambda}_0 , \quad
	& \bm\Theta_0 &= -  {\bm\Lambda}_0.
\end{IEEEeqnarray}
The items corresponding to ${\bm\lambda}_0$ and ${\bm\Lambda}_0$ are the approximations of items corresponding to $\sum_{q} \mathbf b_q$ and $\sum_{q} \mathbf C_q$ adding the information in $d(\mathbf h)$.
Define the set of all possible target points $p( \mathbf h; \bm\theta_0 , \bm\Theta_0)$ as the target manifold
\begin{IEEEeqnarray}{Cl}
	\mathcal M_0
	= \left\{p( \mathbf h; \bm\theta_0 , \bm\Theta_0)\right\} 
\end{IEEEeqnarray}
which is also an $e$-flat manifold.

\begin{figure}
	\centering
	\includegraphics[width=0.6\linewidth]{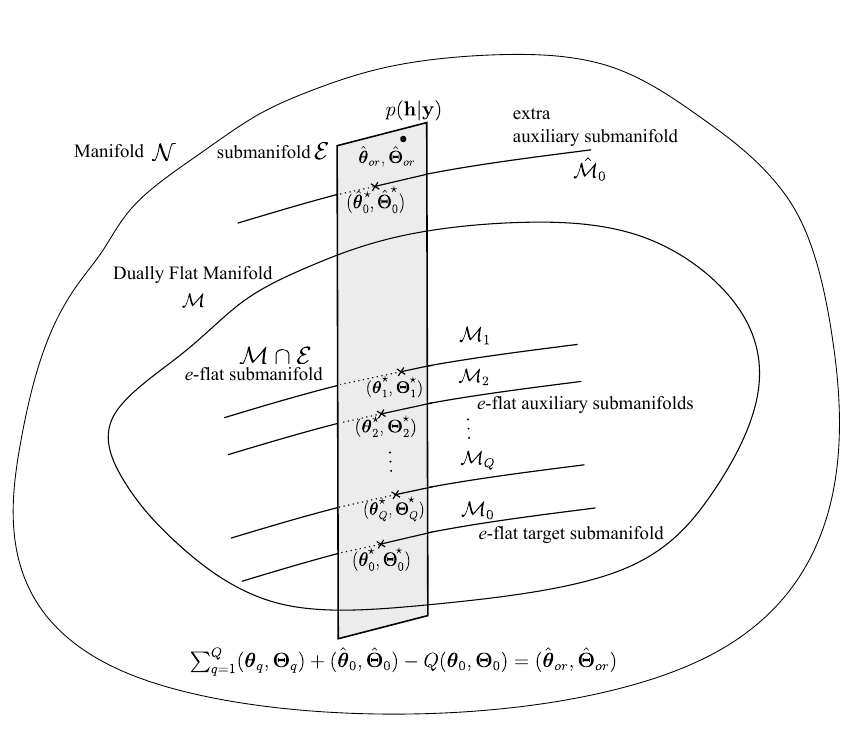}
	\caption{$e$-condition}
	\label{fig:econd}
\end{figure}

After defining the auxiliary manifolds and the target manifold, we introduce the first condition of the information geometry framework, which is given by
\begin{IEEEeqnarray}{Cl}
	\label{eq:econdnewth}
	\sum_{q=1}^Q({\bm\theta}_q,{\bm\Theta}_q) + (\hat{\bm\theta}_0,\hat{\bm\Theta}_0) - Q({\bm\theta}_0,{\bm\Theta}_0) = (\hat{\bm\theta}_{or},\hat{\bm\Theta}_{or})
	\IEEEeqnarraynumspace
\end{IEEEeqnarray}
or equivalently 
\begin{IEEEeqnarray}{Cl}
	\label{eq:econdnewla}
	\sum_{q=1}^Q({\bm\lambda}_q,{\bm\Lambda}_q) + (\hat{\bm\lambda}_0,\hat{\bm\Lambda}_0) - Q({\bm\lambda}_0,{\bm\Lambda}_0) = 0
\end{IEEEeqnarray}
which means $({\bm\lambda}_q,{\bm\Lambda}_q)$, $(\hat{\bm\lambda}_0,\hat{\bm\Lambda}_0)$ and $({\bm\lambda}_0,{\bm\Lambda}_0)$ are linearly dependent.
It is an extension of the $e$-condition defined in \cite{lu2024ICIGA}.
We still call it the $e$-condition and plot it in Fig.~\ref{fig:econd} for illustration purpose.

\subsubsection{The $m$-projection and the $m$-condition}

Let $\bm\mu_q$ and $\bm\Sigma_q$ be the mean and covariance of the Gaussian PDF $p( \mathbf h;  {\bm\theta}_q, {\bm\Theta}_q)$, respectively, which are also called expectation parameters.  Let $\hat{\bm\mu}_0$ and $\hat{\bm\Sigma}_0$ be the mean and covariance of $p_d( \mathbf h;  \hat{\bm\theta}_0, \hat{\bm\Theta}_0)$, respectively.
By transforming the natural parameters of auxiliary points into expectation parameters, we have
\begin{IEEEeqnarray}{Cl}
	\bm\mu_q 
	&= - \frac{1}{2} \bm\Theta_q^{-1}
	\bm\theta_q =  \frac{1}{2} ({\bm\Lambda}_q + \mathbf C_q)^{-1} ({\bm\lambda}_q+\mathbf{b}_q ) 
	\IEEEyesnumber\IEEEyessubnumber*
	\label{eq:muqhat}\\
	\bm\Sigma_q
	&= - \frac{1}{2} \bm\Theta_q^{-1} =   \frac{1}{2} ({\bm\Lambda}_q + \mathbf C_q)^{-1}.
\end{IEEEeqnarray}
The projection onto an $e$-flat manifold is called the $m$-projection.
Then, we $m$-project auxiliary points into the target manifold. 
The $m$-projection of $p( \mathbf h;  {\bm\theta}_q, {\bm\Theta}_q)$ to $\mathcal M_0$ doesn't change the expectations of $\mathbf h$ \cite[Theorem 11.6]{amari2016information}.
From $p( \mathbf h; \bm\theta_0 , \bm\Theta_0) \in \mathcal M_0$, the mean $\bm\mu_q^0$ and covariance $\bm\Sigma_q^0$ of the $m$-projection satisfy 
\begin{IEEEeqnarray}{ClCl}
	\label{eq:muMmproj0}
	\bm\mu_q^0
	&= \bm\mu_q , \quad
	& \bm\Sigma_q^0
	&= \mathbf I \odot \bm\Sigma_q.
\end{IEEEeqnarray}

The natural parameters of the $m$-projection points are transformed from the expectation parameters as
\begin{IEEEeqnarray}{Cl}
	\subnumberinglabel{eq:projtTheta}
	\bm\theta_q^0
	&= (\bm\Sigma_q^0)^{-1} \bm\mu_q^0
	= - 2 \cdot \bm\Theta_q^0\bm\mu_q^0
	\notag\\
	&= - \bm\Theta_q^0
	({\bm\Lambda}_q + \mathbf C_q)^{-1} ({\bm\lambda}_q+\mathbf{b}_q )\\
	\bm\Theta_q^0
	&= -\frac{1}{2} (\bm\Sigma_q^0)^{-1}
	= (\mathbf I\odot \bm\Theta_q^{-1})^{-1}
	\notag\\
	&= - (\mathbf I\odot 
	({\bm\Lambda}_q + \mathbf C_q)^{-1})^{-1}.
\end{IEEEeqnarray}
Here, the $m$-projection still involves matrix inversion, \textit{i.e.}, $({\bm\Lambda}_q + \mathbf C_q)^{-1}$. 
Since ${\bm\Lambda}_q$ is diagonal, the matrix inversion can be implemented with low complexity by a proper setting of $\mathbf C_q$.

Define $\bm\theta_q^0
=  {\bm\lambda}_q^0$, 
$\bm\Theta_q^0
= - {\bm\Lambda}_q^0$.
The beliefs are calculated as
\begin{IEEEeqnarray}{Cl}
	\bm\xi_q 
	={\bm\lambda}_q^0 
	- {\bm\lambda}_q, \quad
	\bm\Xi_q 
	= {\bm\Lambda}_q^0
	- {\bm\Lambda}_q
\end{IEEEeqnarray}
where $\bm\Xi_q$ is also diagonal. 
Thus, the natural parameters of $m$-projection $p( \mathbf h; \bm\theta_q^0 , \bm\Theta_q^0)$ can also be represented as.
\begin{IEEEeqnarray}{Cl}
	\subnumberinglabel{eq:tThetaq0}
	\bm\theta_q^0
	&=  {\bm\lambda}_q + \bm\xi_q  \\
	\bm\Theta_q^0
	&= -\left({\bm\Lambda}_q + \bm\Xi_q  \right).
\end{IEEEeqnarray}
By comparing \eqref{eq:tThetaq0} with \eqref{eq:tThetaq}, it can be seen that $\bm\xi_q$,  $\bm\Xi_q$ are approximations of $\mathbf b_q$,  $\mathbf C_q$ in $\bm\theta_q$ and $\bm\Theta_q$.

The items corresponding to ${\bm\lambda}_q$, ${\bm\Lambda}_q$ are the approximations of the interaction items except the $q$-th interaction item adding the information in $d(\mathbf h)$. 
Then the update for parameters of the target point and auxiliary points are constructed as
\begin{IEEEeqnarray}{ClCl}
	\subnumberinglabel{eq:newlLambda0q}
	\hat{\bm\lambda}_0^{t+1}
	&= \sum_{q=1}^Q \bm\xi_q^t  , \quad
	& \hat{\bm\Lambda}_0^{t+1} 
	&= \sum_{q=1}^Q \bm\Xi_q^t  \\
	\bm\lambda_0^{t+1}  
	&=  \upsilon(\hat{\bm\lambda}_0^{t+1},
	\hat{\bm\Lambda}_0^{t+1}) 
	, \quad
	& \bm\Lambda_0^{t+1}  &=  \Upsilon(\hat{\bm\lambda}_0^{t+1},
	\hat{\bm\Lambda}_0^{t+1})\\
	\bm\lambda_q^{t+1}
	&= \bm\lambda_0^{t+1}
	- \bm\xi_q^t , \quad
	& \bm\Lambda_q^{t+1}
	&= \bm\Lambda_0^{t+1}
	- \bm\Xi_q^t 
	\IEEEeqnarraynumspace
\end{IEEEeqnarray}
where the superscript $t$ denotes the number of iterations, 
functions $\upsilon(\hat{\bm\lambda},
\hat{\bm\Lambda})$ and $\Upsilon(\hat{\bm\lambda},
\hat{\bm\Lambda})$ are designed with the criterion of making $(\bm\mu_0,\bm\Sigma_0)
	 =(\hat{\bm\mu}_0,\hat{\bm\Sigma}_0)$. 
The choice of $\upsilon(\hat{\bm\lambda},
\hat{\bm\Lambda})$ and $\Upsilon(\hat{\bm\lambda},
\hat{\bm\Lambda})$ depends on specific \textit{a priori} PDF $p(\mathbf h)$.
When the \textit{a priori} PDF is a Gaussian PDF, we have $\bm\lambda_0^{t+1}=\hat{\bm\lambda}_0^{t+1}$ and ${\bm\Lambda}_0^{t+1}=\hat{\bm\Lambda}_0^{t+1}$, and the update reduces to that in \cite{lu2024ICIGA}.
From \eqref{eq:newlLambda0q} we can easily verify 
that the $e$-condition holds.


%
%
%
%
%
%

By iteratively calculating $m$-projection, beliefs and parameter update, the parameters $\bm\theta_0^{\star}$, $\bm\Theta_0^{\star}$ of target point are finally obtained when the algorithm converges, thereby obtaining the target PDF $p( \mathbf h; \bm\theta_0^{\star}, \bm\Theta_0^{\star})$. 
From \eqref{eq:tThetaq0} and \eqref{eq:newlLambda0q}, when the algorithm converges, all the parameters no longer change, then we have 
\begin{IEEEeqnarray}{Cl}
	\subnumberinglabel{eq:tThetaq0star}
	\big({\bm\theta}_q^0\big)^{\star}
	&=  {\bm\lambda}_0^{\star}
	= {\bm\theta}_0^{\star}   \\
	\big({\bm\Theta}_q^0\big)^{\star}
	&= -  {\bm\Lambda}_0^{\star}  
	= {\bm\Theta}_0^{\star}.
\end{IEEEeqnarray}
%
Thus, the following condition is satisfied
\begin{IEEEeqnarray}{Cl}
	\label{eq:mcon}
	(\bm\mu_0^{\star},\bm\Sigma_0^{\star})
	 =(\hat{\bm\mu}_0^{\star},\hat{\bm\Sigma}_0^{\star}) = (\bm\mu_q^{\star} ,\mathbf I \odot \bm\Sigma_q^{\star} ). 
\end{IEEEeqnarray}
It is called the $m$-condition and is plotted in Fig.~\ref{fig:mcond}.
Finally, the output of this algorithm is the mean and covariance of target point
\begin{IEEEeqnarray}{Cl}
	\bm\mu_0 
	&= - \frac{1}{2} \bm\Theta_0^{-1}
	\bm\theta_0
	=  \frac{1}{2}{\bm\Lambda}_0^{-1}{\bm\lambda}_0
	\IEEEyesnumber\IEEEyessubnumber*\\
	\bm\Sigma_0
	&= - \frac{1}{2} \bm\Theta_0^{-1}
	= \frac{1}{2} {\bm\Lambda}_0^{-1}
\end{IEEEeqnarray}
which is regarded as the approximated mean and diagonal of the covariance matrix of the original \textit{a posteriori} PDF.


 \begin{figure}
	\centering
	\includegraphics[width=0.6\linewidth]{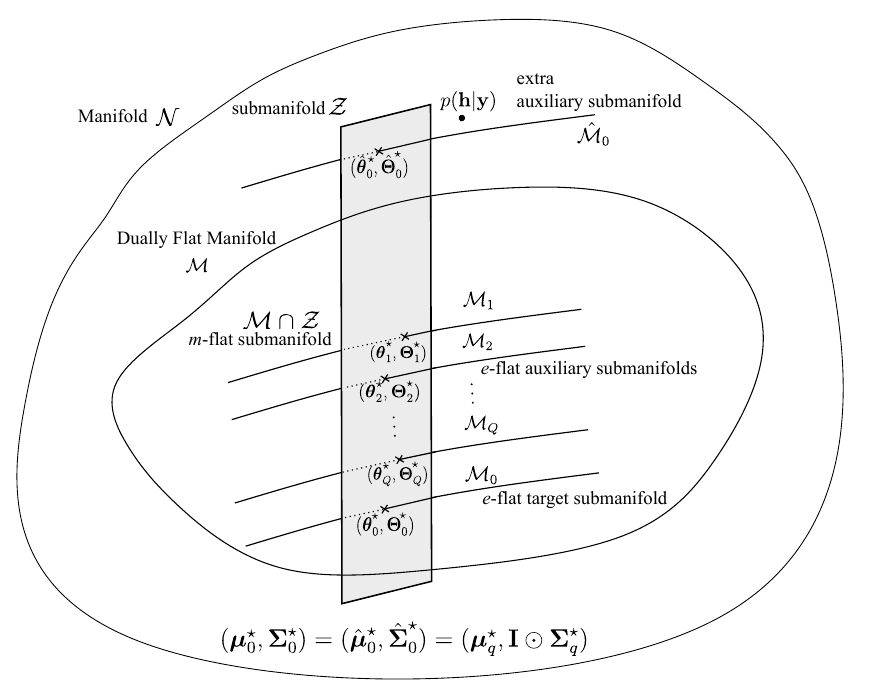}
	\caption{$m$-condition}
	\label{fig:mcond}
\end{figure}

\section{Information Geometry Approach for BPDN}
\label{sec:IGA}
In this section, we provide the derivation of IGA for BPDN in standard linear regression. 
The IGA for basis pursuit and Bayesian inference can be derived similarly, and thus are omitted for brevity.

\subsection{Auxiliary and Target PDFs}

Recall the received signal model \eqref{reModel}. 
The postulated \textit{a posteriori} PDF of $\mathbf h$  is expressed as
\begin{IEEEeqnarray}{Cl}
	\label{eq:postupost}
	&\quad p( \mathbf h|\mathbf y) = \exp\left\{-\beta\kappa\|\mathbf h\|_1 
	- \frac{\beta}{2}\mathbf h^T\mathbf A^T\mathbf A\mathbf h 
	+ \beta\mathbf h^T\mathbf A^T\mathbf y 
	+ \psi_{or}\right\}
	\IEEEeqnarraynumspace
\end{IEEEeqnarray}	
where $\psi_{or}$ is the normalization factor. 
We split $\frac{\beta}{2}\mathbf A^T\mathbf A$ and $\beta\mathbf A^T\mathbf y$ based on each element of the received vector $\mathbf y$, \textit{i.e.},
\begin{IEEEeqnarray}{Cl}
	\subnumberinglabel{eq:splitSpace}
	\beta\mathbf A^T\mathbf y 
	= \beta \sum_{m=1}^M \mathbf a_m y_m\\
	\frac{\beta}{2}\mathbf A^T\mathbf A 
	= \frac{\beta}{2}\sum_{m=1}^M \mathbf a_m \mathbf a_m^T
\end{IEEEeqnarray}
where $\mathbf a_m=[\mathbf A^T]_{:, m} \in\mathbb C^{N \times 1}$ is the $m$-th column of $\mathbf A^T$, and $y_m$ is the $m$-th element of $\mathbf y$. 
Based on the splits in \eqref{eq:splitSpace}, we define $M$ auxiliary PDFs. The $m$-th auxiliary PDF is
\begin{IEEEeqnarray}{Cl}
	&p( \mathbf h;  {\bm\theta}_m, {\bm\Theta}_m)  
	= \exp\Big\{ -\frac{\beta}{2}\mathbf h^T{\bm\Lambda}_m\mathbf h 
	+ \beta\mathbf h^T{\bm\lambda}_m    - \frac{\beta}{2} \mathbf h^T\mathbf a_m\mathbf a_m^T\mathbf h 
	+ \beta\mathbf h^T \mathbf a_m y_m + \psi_m
	\Big\}.
	\IEEEeqnarraynumspace
\end{IEEEeqnarray}
where $\bm\Lambda_m$ is diagonal. Meanwhile, the extra PDF and the target PDF are defined as
\begin{IEEEeqnarray}{Cl}
	\label{eq:extrapd0}
	&\quad p_d( \mathbf h;  \hat{\bm\theta}_0, \hat{\bm\Theta}_0) = \exp\left\{-\beta\kappa\|\mathbf h\|_1 - \frac{\beta}{2}\mathbf h^T \hat{\bm\Lambda}_0\mathbf h 
	+ \beta\mathbf h^T \hat{\bm\lambda}_0 + \hat\psi_{0} \right\} 
\end{IEEEeqnarray}
and
\begin{IEEEeqnarray}{Cl}
	p( \mathbf h;  {\bm\theta}_0, {\bm\Theta}_0)
	&= \exp\left\{- \frac{\beta}{2}\mathbf h^T {\bm\Lambda}_0\mathbf h 
	+ \beta\mathbf h^T {\bm\lambda}_0 + \psi_{0} \right\}
\end{IEEEeqnarray}
where $\bm\Lambda_0$ and $\hat{\bm\Lambda}_0$ are diagonal matrices.

\subsection{Relation between the Extra PDF and the Target PDF}
Since $\hat{\bm\Lambda}_0$ is diagonal, we have that the elements of  $\mathbf h$ with the PDF $p_d( \mathbf h;  \hat{\bm\theta}_0, \hat{\bm\Theta}_0)$ are independent, and $p_d( \mathbf h;  \hat{\bm\theta}_0, \hat{\bm\Theta}_0)$ is the product of independent PDFs of $N$ elements of $\mathbf h$, \textit{i.e.},
\begin{IEEEeqnarray}{Cl}
	&\quad p_d(\mathbf h;\hat{\bm\theta}_0,\hat{\bm\Theta}_0) \propto \prod\limits_{n=1}^N\exp\left\{-\beta\kappa| h_n| - \frac{\beta}{2} [\hat{\bm\Lambda}_0]_n h_n^2
	+ \beta h_n [\hat{\bm\lambda}_0]_n \right\}.
\end{IEEEeqnarray}
To obtain the mean and variance of $p_d(\mathbf h;\hat{\bm\theta}_0,\hat{\bm\Theta}_0)$, we compute the mean and variance of each independent distribution.
For convenience, we define the following PDF
\begin{IEEEeqnarray}{Cl}
	p_{d}(h;\hat{\theta},\hat{\Theta}) 
	&= \frac{1}{Z_{p_{d}}} \exp\left\{-\beta \left(\kappa|h| + \frac{1}{2}\hat{\Lambda} h^2 - h\hat{\lambda} \right)\right\}
\end{IEEEeqnarray}
where $\hat{\theta} =  \beta\hat{\lambda}$,
	$\hat{\Theta} = -\frac{\beta}{2}\hat{\Lambda}$.
When $\beta \rightarrow  \infty$, the mean and variance of $p_{d}(h;\hat{\theta},\hat{\Theta})$ can be derived with the Laplace method of integration as in \cite{Rangan2012}. The form of the mean and variance have also been derived in  \cite{Donoho2010}.
	
	By extending the results of \cite{Donoho2010}, we have the following statement. 
	When $\beta \rightarrow  \infty$, $p_{d}(h;\hat{\theta},\hat{\Theta}) $ and the Gaussian PDF $p(h;\theta,\Theta)$ have the same mean and covariance, where
	$\theta = \beta\lambda$, 
	$\Theta = -\frac{\beta}{2}\Lambda$, 
	$\lambda = \upsilon(\hat\lambda,\hat\Lambda)$, 
	$\Lambda = \Upsilon(\hat\lambda,\hat\Lambda)$, 
	and the soft threshold functions $\upsilon(\hat\lambda,\hat\Lambda)$, $\Upsilon(\hat\lambda,\hat\Lambda)$ are defined as
	\begin{IEEEeqnarray}{Cl}
		\subnumberinglabel{eq:uUpsilons}
		\upsilon(\hat\lambda,\hat\Lambda) 
		&= 
		\begin{cases}
			(|\hat\lambda|-\kappa)  \text{sign}(\hat\lambda), \quad |\hat\lambda|>\kappa\\
			0, \quad \, \text{otherwise}
		\end{cases}\\
		\Upsilon(\hat\lambda,\hat\Lambda) 
		&=
		\begin{cases}
			\hat\Lambda, \quad |\hat\lambda|>\kappa\\
			\infty, \quad \, \text{otherwise}
		\end{cases}
	\end{IEEEeqnarray}
	The detailed derivation is provided in Appendix \ref{appendice: theorem Gaussian approx} for self-contained purpose.

From the above, we can find a target PDF $ p(\mathbf h;{\bm\theta}_0,{\bm\Theta}_0)$ with the same mean and variance as $p_d(\mathbf h;\hat{\bm\theta}_0,\hat{\bm\Theta}_0)$ when $\beta\rightarrow\infty$.
Then, we have 
\begin{IEEEeqnarray}{Cl}
	\subnumberinglabel{eq:SpaceGaApprom}
	\bm\mu_0
	= \hat{\bm\mu}_0 , \quad
	\bm\Sigma_0 
	= \hat{\bm\Sigma}_0 \\
	\bm\lambda_0
	= \upsilon(\hat{\bm\lambda}_0,\hat{\bm\Lambda}_0) , \quad
	\bm\Lambda_0
	= \Upsilon(\hat{\bm\lambda}_0,\hat{\bm\Lambda}_0).
\end{IEEEeqnarray}
Thus, $p(\mathbf h;{\bm\theta}_0,{\bm\Theta}_0) $ is a Gaussian approximation of $p_d(\mathbf h;\hat{\bm\theta}_0,\hat{\bm\Theta}_0) $.
The mean and covariance of $p(\mathbf h;{\bm\theta}_0,{\bm\Theta}_0) $ are obtained by
\begin{IEEEeqnarray}{ClCl}
	\subnumberinglabel{eq:SpaceGaAppro0}
	\bm\mu_0 
	&= {\bm\Lambda}_0^{-1} {\bm\lambda}_0, \quad
	&\bm\Sigma_0
	&= \frac{1}{\beta} {\bm\Lambda}_0^{-1}.
\end{IEEEeqnarray}

\subsection{Algorithm Derivation}

We derive the beliefs $\bm\xi_m$, $\bm\Xi_m$ corresponding to the $m$-th auxiliary point at the $t$-th iteration in the following theorem.

\begin{theorem}
	\label{th: theorem beliefs IGA}
	Define the beliefs $\bm\xi_m$, $\bm\Xi_m$ as $\bm\xi_m=\bm\lambda_m^0 - \bm\lambda_m$ and $\bm\Xi_m = \bm\Lambda_m^0 -\bm\Lambda_m$. Then, the beliefs are given by
	\begin{IEEEeqnarray}{Cl}
		\bm\xi_m  
		&= \left(r_m\mathbf I - \mathbf L_m{\bm\Lambda}_m^{-1} \right)^{-1}   \left((\mathbf L_m-\mathbf a_m\mathbf a_m^T){\bm\Lambda}_m^{-1}{\bm\lambda}_m + \mathbf a_m y_m\right) 
		\IEEEyesnumber\IEEEyessubnumber*
		\label{eq:IGAbeliefsxi}\\
		\bm\Xi_m
		&= \left(r_m\mathbf L_m^{-1}-{\bm\Lambda}_m^{-1}\right)^{-1}
		\label{eq:IGAbeliefsXi}
	\end{IEEEeqnarray}	
	where $r_m=1+\mathbf a_m^T{\bm\Lambda}_m^{-1}\mathbf a_m$ and $\mathbf L_m = \mathbf I\odot\mathbf a_m\mathbf a_m^T$.
\end{theorem}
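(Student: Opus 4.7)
The plan is to unfold the general information geometry framework of Section~\ref{sec:IGAproce} on the Gaussian auxiliary manifolds defined for BPDN and simplify in closed form. First, I would read off the natural parameters of the $m$-th auxiliary PDF: it is a Gaussian with precision $\beta(\bm\Lambda_m + \mathbf a_m\mathbf a_m^T)$ and linear term $\beta(\bm\lambda_m + \mathbf a_m y_m)$, so its mean and covariance are $\bm\mu_m = (\bm\Lambda_m + \mathbf a_m\mathbf a_m^T)^{-1}(\bm\lambda_m + \mathbf a_m y_m)$ and $\bm\Sigma_m = \tfrac{1}{\beta}(\bm\Lambda_m + \mathbf a_m\mathbf a_m^T)^{-1}$. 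The overall factor $\beta$ will cancel throughout because the target natural parameters carry the same scale, cf.~\eqref{eq:SpaceGaAppro0}.

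Second, I would invert $\bm\Lambda_m + \mathbf a_m\mathbf a_m^T$ via the Sherman--Morrison identity to obtain $\bm\Lambda_m^{-1} - \tfrac{1}{r_m}\bm\Lambda_m^{-1}\mathbf a_m\mathbf a_m^T\bm\Lambda_m^{-1}$ with $r_m = 1+\mathbf a_m^T\bm\Lambda_m^{-1}\mathbf a_m$. Applying the $m$-projection rules \eqref{eq:muMmproj0} together with the key identity $\mathbf I\odot(\bm\Lambda_m^{-1}\mathbf a_m\mathbf a_m^T\bm\Lambda_m^{-1}) = \bm\Lambda_m^{-1}\mathbf L_m\bm\Lambda_m^{-1}$, which holds because $\bm\Lambda_m$ is diagonal and $\mathbf L_m = \mathbf I\odot\mathbf a_m\mathbf a_m^T$, yields $\bm\Lambda_m^0 = (\bm\Lambda_m^{-1} - \tfrac{1}{r_m}\bm\Lambda_m^{-1}\mathbf L_m\bm\Lambda_m^{-1})^{-1}$ and $\bm\lambda_m^0 = \bm\Lambda_m^0\bm\mu_m$. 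Subtracting $\bm\Lambda_m$ and $\bm\lambda_m$ then furnishes $\bm\Xi_m$ and $\bm\xi_m$.

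For $\bm\Xi_m$, since every matrix in $\bm\Lambda_m^0$ is diagonal, a direct componentwise computation gives $[\bm\Xi_m]_{ii} = \tfrac{[\mathbf L_m]_{ii}[\bm\Lambda_m]_{ii}}{r_m[\bm\Lambda_m]_{ii} - [\mathbf L_m]_{ii}}$, which coincides with the $(i,i)$ entry of $(r_m\mathbf L_m^{-1}-\bm\Lambda_m^{-1})^{-1}$ and so establishes \eqref{eq:IGAbeliefsXi}. The $\bm\xi_m$ identity is the real work: after substituting the closed-form expression for $\bm\mu_m$, the quantity $\bm\Lambda_m^0\bm\mu_m-\bm\lambda_m$ splits into a piece proportional to $\mathbf L_m\bm\Lambda_m^{-1}\bm\lambda_m$ and an off-diagonal coupling through $\mathbf a_m\mathbf a_m^T\bm\Lambda_m^{-1}\bm\lambda_m$. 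The main obstacle is recognizing that $\mathbf L_m - \mathbf a_m\mathbf a_m^T$, whose diagonal vanishes because $[\mathbf L_m]_{ii}=[\mathbf a_m]_i^2$, precisely packages this off-diagonal coupling; once this is seen, \eqref{eq:IGAbeliefsxi} follows by multiplying both sides on the left by the diagonal matrix $r_m\mathbf I - \mathbf L_m\bm\Lambda_m^{-1}$ and checking componentwise that the coefficients of $[\bm\lambda_m]_i$, $[\mathbf a_m]_i y_m$, and $[\mathbf a_m]_i\,\mathbf a_m^T\bm\Lambda_m^{-1}\bm\lambda_m$ match on the two sides.
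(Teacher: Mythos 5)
Your proposal is correct and follows essentially the same route as the paper's proof: read off the natural parameters of the auxiliary Gaussian, convert to expectation parameters, invert via Sherman--Morrison, apply the $m$-projection with the diagonality of $\bm\Lambda_m$ to replace $\mathbf I\odot(\bm\Lambda_m^{-1}\mathbf a_m\mathbf a_m^T\bm\Lambda_m^{-1})$ by $\bm\Lambda_m^{-1}\mathbf L_m\bm\Lambda_m^{-1}$, and subtract. The only cosmetic differences are that the paper obtains $\bm\Xi_m$ by a Woodbury-identity rearrangement where you compute componentwise (equivalent, since everything is diagonal), and it derives $\bm\xi_m$ forward rather than by your verify-after-multiplying check.
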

\begin{proof}
	The proof is provided in Appendix \ref{appendice: theorem beliefs IGA}.
\end{proof}

After obtaining the beliefs $\bm\xi_m$, $\bm\Xi_m$, the parameters of the target and auxiliary points are updated as
\begin{IEEEeqnarray}{ClCl}
	\subnumberinglabel{eq:newlLambda0m}
	\hat{\bm\lambda}_0^{t+1}
	&= \sum_{m=1}^M \bm\xi_m^t  , \quad
	& \hat{\bm\Lambda}_0^{t+1} 
	&= \sum_{m=1}^M \bm\Xi_m^t  \\
	\bm\lambda_0^{t+1}  
	&=  \upsilon(\hat{\bm\lambda}_0^{t+1},
	\hat{\bm\Lambda}_0^{t+1}) 
	, \quad &
	\bm\Lambda_0^{t+1}  &=  \Upsilon(\hat{\bm\lambda}_0^{t+1},
	\hat{\bm\Lambda}_0^{t+1})\\
	\bm\lambda_m^{t+1}
	&= \bm\lambda_0^{t+1}
	- \bm\xi_m^t , \quad
	& \bm\Lambda_m^{t+1}
	&= \bm\Lambda_0^{t+1}
	- \bm\Xi_m^t.
	\label{eq:lLambdamtand1}
	\IEEEeqnarraynumspace
\end{IEEEeqnarray}
In practice, it might be necessary to use the damped update with damping coefficient $\alpha \in (0,1]$ to enhance the convergence of the algorithm without changing its fixed point. 
After the algorithm converges, the outputs are the mean $\bm\mu_0$ and variance $\bm\Sigma_0$ of $p(\mathbf h;\bm\theta_0,\bm\Theta_0)$.
The IGA for BPDN is summarized as Algorithm \ref{algo:SpaceIG}.


\begin{algorithm}
	\caption{IGA for BPDN}
	\label{algo:SpaceIG}
	\KwIn{The received signal $\mathbf y$, the regularization parameter $\kappa$ and the maximal iteration number $T$.}
	\KwOut{\textit{a posteriori} mean $\bm\mu_0^t$ of $\mathbf h$.}
	\textbf{Initialization}:
	Set $t=0$, 
	$\bm\lambda_0^t=\bm\lambda_n^t=\mathbf 0$,
	$\bm\Lambda_0^t=\bm\Lambda_n^t= \mathbf I$;\\
	\While{$t\leq T$}{
		Calculate $M$ beliefs as
		\begin{IEEEeqnarray}{Cl}
			\bm\xi_m^t
			&= \left(r_m^t\mathbf I - \mathbf L_m({\bm\Lambda}_m^t)\right)^{-1}   \left((\mathbf L_m-\mathbf a_m\mathbf a_m^T)({\bm\Lambda}_m^t)^{-1}{\bm\lambda}_m^t + \mathbf a_m y_m\right)\notag\\
			\bm\Xi_m^t
			&= \left(r_m^t\mathbf L_m^{-1}-({\bm\Lambda}_m^t)^{-1}\right)^{-1}\notag 
		\end{IEEEeqnarray}
		\\
		Update the parameters of the auxiliary, extra and target PDFs as \eqref{eq:newlLambda0m};\\
		$t=t+1$;
	}
	Output the \textit{a posteriori} mean of $\mathbf h$ as
	\begin{IEEEeqnarray}{Cl}
		\bm\mu_0^t
		&=  ({\bm\Lambda}_0^t)^{-1}{\bm\lambda}_0^t \notag
	\end{IEEEeqnarray}
\end{algorithm}

\section{Approximate Information Geometry Approach for BPDN}
\label{sec:AIGA}
In this section, we propose the AIGA for BPDN in standard linear regression. 
The derivations of the AIGA are given based on the IGA.

We assume that $a_{mn} = O(1/\sqrt M),\forall m,n$ in the following derivation as in the AMP algorithm. 
Recall the relationships between the parameters of auxiliary and target points
in \eqref{eq:lLambdamtand1}.
We use the approximations of $\bm\xi_m^t$ and $\bm\Xi_m^t$ to obtain the simpler AIGA. 

\subsection{Approximations of Beliefs}

We first derive the approximation of the belief $\bm\Xi_m^t$.
According to $r_m = 1+\mathbf a_m^T{\bm\Lambda}_m^{-1}\mathbf a_m = 1 + \sum_{n=1}^N a_{mn}^2 [\bm\Lambda_m^t]_{nn}^{-1}$,  $\mathbf L_m = \mathbf I\odot\mathbf a_m\mathbf a_m^T$ and $a_{mn} = O(1/\sqrt M)$, we have that $r_m \rightarrow 1 + \frac{1}{M} \sum_{n=1}^N [\bm\Lambda_m^t]_{nn}^{-1}$  
and $\mathbf L_m \rightarrow \frac{1}{M}  \mathbf I$.
Then, the elements of $\bm\Xi_m^t$ in \eqref{eq:IGAbeliefsXi} can be expressed as
\begin{IEEEeqnarray}{Cl}
	[\bm\Xi_m^t]_{nn}
	&= \left(r_m^t [\mathbf L_m]_{nn}^{-1} 
	- [\bm\Lambda_m^t]_{nn}^{-1}\right)^{-1} \notag\\
	&\rightarrow \left(M + \sum\limits_{n' \neq n} [\bm\Lambda_m^t]_{n'n'}^{-1}
	\right)^{-1}
 \notag\\
	&\rightarrow \left(M + \sum_{n'=1}^N [\bm\Lambda_m^t]_{n'n'}^{-1}
	\right)^{-1}
\end{IEEEeqnarray}
when $M$ and $N$ are large enough. 
From the above equation, we can see that  $[\bm\Xi_m^{t}]_{nn} = O(1/M)$. 
Then, we have ${\bm\Lambda}_m^{t+1}
= {\bm\Lambda}_0^{t+1}
- \bm\Xi_m^t \rightarrow {\bm\Lambda}_0^{t+1}$ and
\begin{IEEEeqnarray}{Cl}
	\label{eq:Xi0tapprox}
	[\bm\Xi_m^t]_{nn}
	\rightarrow \left(M + {\rm tr}\left(( \bm\Lambda_0^t)^{-1}\right)
	\right)^{-1}  
	\triangleq \Xi_0^t 
\end{IEEEeqnarray}
when $M$ and $N$ are large enough, where  
${\rm tr}(\cdot)$ is the trace of a matrix. 
This shows that the belief $\bm\Xi_m^t \rightarrow \Xi_0^t \mathbf I$  when $M$ and $N$ are sufficiently large. 


Then, we focus on the approximation of the belief $\bm\xi_m^t$. 
Define $\hat\Lambda_0^{t+1} = M \Xi_0^t$. 
From \eqref{eq:IGAbeliefsXi}, \eqref{eq:SpalLambdane2}, $\bm\Xi_m^t \rightarrow \Xi_0^t \mathbf I$ and $\mathbf L_m \rightarrow \frac{1}{M} \mathbf I$, we have
\begin{IEEEeqnarray}{Cl}
	\label{eq:xiformer}
	  \left(r_m^t\mathbf I - \mathbf L_m(\bm\Lambda_m^t)^{-1}\right)^{-1}  
	&= \left(r_m^t\mathbf L_m^{-1} 
	- (\bm\Lambda_m^t)^{-1}\right)^{-1} \mathbf L_m^{-1}\notag\\
	&= \bm\Xi_m^t \mathbf L_m^{-1}
	\rightarrow \hat\Lambda_0^{t+1} \mathbf I .
\end{IEEEeqnarray}
Then, we rewrite \eqref{eq:IGAbeliefsxi} as
\begin{IEEEeqnarray}{Cl}
	\label{eq:SpalLambdane2}
	\bm\xi_m^t
	&= \hat\Lambda_0^{t+1}\left((\frac{1}{M} \mathbf I-\mathbf a_m\mathbf a_m^T)(\bm\Lambda_m^t)^{-1} \bm\lambda_m^t + \mathbf a_m y_m\right).
\end{IEEEeqnarray}
From $\bm\lambda_m^{t+1}
	 = \bm\lambda_0^{t+1}
	- \bm\xi_m^t $ in \eqref{eq:lLambdamtand1} and $\bm\Lambda_m^{t+1} \rightarrow \bm\Lambda_0^{t+1}$, we get
\begin{IEEEeqnarray}{Cl}
	\label{eq:Zmt}
	\mathbf a_m^T ({\bm\Lambda}_m^t)^{-1} \bm\lambda_m^t 
	\rightarrow \mathbf a_m^T (\bm\Lambda_0^t)^{-1} ({\bm\lambda}_0^t - \bm\xi_m^{t-1})
	\triangleq Z_m^t
\end{IEEEeqnarray}
and
\begin{IEEEeqnarray}{Cl}
	\label{eq:DeltaZmt}
	 \frac{1}{M}(\bm\Lambda_m^t)^{-1} \bm\lambda_m^t 
	&\rightarrow \frac{1}{M} (\bm\Lambda_0^t)^{-1} (\bm\lambda_0^t - \bm\xi_m^{t-1}) \notag\\
	&\overset{(a)}{\rightarrow} \frac{1}{M} (\bm\Lambda_0^t)^{-1} \bm\lambda_0^t
\end{IEEEeqnarray}
where $\overset{(a)}{\rightarrow}$ is obtained because 
 we can extract the coefficient $a_{mn}$ from $ [\bm\xi_m^{t-1}]_n$ according to  \eqref{eq:SpalLambdane2}, and we have
$\frac{1}{M} [{\bm\Lambda}_0^t]_{nn}^{-1} [\bm\xi_m^{t-1}]_n 
\rightarrow 1/M^{3/2} \rightarrow 0$ when $M$ is large enough. 
Let $z_m^t = y_m - Z_m^t $ and combining the equations from \eqref{eq:SpalLambdane2} to \eqref{eq:DeltaZmt}, we have 
\begin{IEEEeqnarray}{Cl}
	\label{eq:xixi0delxi}
	\bm\xi_m^t
	&\rightarrow \hat\Lambda_0^{t+1} \Big( \mathbf a_m 
	( y_m - Z_m^t) + \frac{1}{M} (\bm\Lambda_0^t)^{-1} \bm\lambda_0^t \Big) 
\notag\\
	&= \hat{\Lambda}_0^{t+1} \Big( \mathbf a_m 
	  z_m^t + \frac{1}{M}  (\bm\Lambda_0^t)^{-1} \bm\lambda_0^t \Big).
\end{IEEEeqnarray}

\subsection{Updating Process with Approximations}
Next, we update $\hat{\bm\lambda}_0^{t+1}$ and $\hat{\bm\Lambda}_0^{t+1}$ based on the approximations of the beliefs. 
The parameter $\hat{\bm\lambda}_0^{t+1} = \sum_{m=1}^M \bm\xi_m^t$ is approximated as 
\begin{IEEEeqnarray}{Cl}
	\label{eq:lamb0appr}
	\hat{\bm\lambda}_0^{t+1} 
	&\rightarrow \hat{\Lambda}_0^{t+1} \left(\mathbf A^T \mathbf z^t
	+    
	(\bm\Lambda_0^t)^{-1} {\bm\lambda}_0^t\right)
\end{IEEEeqnarray}
where $[\mathbf z^t]_m = z_m^t$.
From $\hat{\bm\Lambda}_0^{t+1} 
	= \sum_{m=1}^M \bm\Xi_m^t \rightarrow M\Xi_0^t  \mathbf I $, we have $\hat{\bm\Lambda}_0^{t+1} \rightarrow \hat{\Lambda}_0^{t+1}  \mathbf I$. 
The natural parameters ${\bm\lambda}_0^{t+1},
{\bm\Lambda}_0^{t+1}$ of approximated Gaussian distributions 
 are then obtained by
\begin{IEEEeqnarray}{Cl}
	{\bm\lambda}_0^{t+1}
	&= \upsilon (\hat{\bm\lambda}_0^{t+1}, \hat{\Lambda}_0^{t+1}\mathbf{I})
	\IEEEyesnumber\IEEEyessubnumber*\\
	\bm\Lambda_0^{t+1}
	&= \Upsilon (\hat{\bm\lambda}_0^{t+1}, \hat{\Lambda}_0^{t+1}\mathbf{I}).
\end{IEEEeqnarray}
By using \eqref{eq:Zmt} and $z_m^t = y_m - Z_m^t $, we can express $z_m^t$ as 
\begin{IEEEeqnarray}{Cl}
	z_m^t 
	&= y_m - \mathbf a_m^T (\bm\Lambda_0^t)^{-1} \bm\lambda_0^t 
	+  \mathbf a_m^T (\bm\Lambda_0^t)^{-1} \bm\xi_m^{t-1}.
\end{IEEEeqnarray}
Substituting \eqref{eq:xixi0delxi}  into $\mathbf a_m^T (\bm\Lambda_0^t)^{-1} \bm\xi_m^{t-1}$ obtains 
\begin{IEEEeqnarray}{Cl}
	 \mathbf a_m^T (\bm\Lambda_0^t)^{-1}  \bm\xi_m^{t-1}  
	&= \mathbf a_m^T (\bm\Lambda_0^t)^{-1} \hat\Lambda_0^t \Big(\mathbf a_mz_m^{t-1} + \frac{1}{M} (\bm\Lambda_0^{t-1})^{-1}  \bm\lambda_0^{t-1} \Big) \notag\\
	&\overset{(b)}{\rightarrow} z_m^{t-1}\hat\Lambda_0^t  \sum_{n=1}^N a_{mn}^2
	[\bm\Lambda_0^t]_{nn}^{-1} \notag\\
	&\rightarrow \frac{1}{M} z_m^{t-1}\hat\Lambda_0^t  \sum_{n=1}^N 
	[\bm\Lambda_0^t]_{nn}^{-1}
	\notag\\
	&= \frac{1}{M} z_m^{t-1}\hat\Lambda_0^t  
	{\rm tr} \big((\bm\Lambda_0^t)^{-1} \big)
\end{IEEEeqnarray}
where $\overset{(b)}{\rightarrow}$ is because 
$\frac{1}{M} \hat\Lambda_0^t a_{mn}
[\bm\Lambda_0^t]_{nn}^{-1}
[\bm\Lambda_0^{t-1}]_{nn}^{-1}
[\bm\lambda_0^{t-1}]_n \rightarrow 1/M^{3/2} \rightarrow 0$.
Then $z_m^t$ is represented as
\begin{IEEEeqnarray}{Cl}
	z_m^t 
	&= y_m - \mathbf a_m^T (\bm\Lambda_0^t)^{-1} \bm\lambda_0^t
	+ \frac{1}{M} z_m^{t-1} \hat{\Lambda}_0^t 
	{\rm tr} \big((\bm\Lambda_0^t)^{-1} \big)
\end{IEEEeqnarray}
which means 
\begin{IEEEeqnarray}{Cl}
	\mathbf z^t
	&= \mathbf y - \mathbf A
	({\bm\Lambda}_0^t)^{-1} {\bm\lambda}_0^t
	+ \frac{1}{M} \mathbf z^{t-1}\hat{\Lambda}_0^t 
	{\rm tr} \big((\bm\Lambda_0^t)^{-1} \big).
\end{IEEEeqnarray}

Besides, from $\hat{\Lambda}_0^{t+1} = M \Xi_0^t$, $\bm\Lambda_m^{t+1} \rightarrow \bm\Lambda_0^{t+1}$ and \eqref{eq:Xi0tapprox}, we can get
\begin{IEEEeqnarray}{Cl}
	(\hat\Lambda_0^{t+1})^{-1}
	&= \frac{1}{M} (\Xi_0^t)^{-1}\notag\\
	&= 1 + \frac{1}{M} {\rm tr} \big((\bm\Lambda_0^t)^{-1} \big).
\end{IEEEeqnarray}
Then the resulting AIGA algorithm is written as
\begin{IEEEeqnarray}{Cl}
\hat{\bm\lambda}_0^{t+1}  
	&= \hat{\Lambda}_0^{t+1} (\mathbf A^T \mathbf z^t
	+   
	(\bm\Lambda_0^t)^{-1} {\bm\lambda}_0^t) \\
		\hat{\Lambda}_0^{t+1}
	&= \Big(1 + \frac{1}{M} {\rm tr} \big((\bm\Lambda_0^t)^{-1} \big) \Big)^{-1}
	\IEEEyesnumber\IEEEyessubnumber*
	\label{eq:IGLambda0up}\\
	\mathbf z^t
	&= \mathbf y - \mathbf A
	({\bm\Lambda}_0^t)^{-1} {\bm\lambda}_0^t 
	+ \frac{1}{M} \mathbf z^{t-1} \hat{\Lambda}_0^t 
	{\rm tr} \big((\bm\Lambda_0^t)^{-1} \big) 
	\label{eq:IGztup}\\
	{\bm\lambda}_0^{t+1}
	&= \upsilon (\hat{\bm\lambda}_0^{t+1}, \hat{\Lambda}_0^{t+1} )
	\label{eq:IGlamhatup}\\
	\bm\Lambda_0^{t+1}
	&= \Upsilon (\hat{\bm\lambda}_0^{t+1}, \hat{\Lambda}_0^{t+1}) .
	\label{eq:IGLamhatup}
	\IEEEeqnarraynumspace
\end{IEEEeqnarray}
Let $\iota^t$ be the ratio of elements in $\hat{\bm\lambda}_0^t$ whose absolute value is larger than $\kappa$, $\delta = M/N$, then $\frac{1}{M} {\rm tr} \big((\bm\Lambda_0^t)^{-1} \big) = \frac{\iota^t}{\delta} (\hat\Lambda_0^t)^{-1}$.
The AIGA for BPDN is summarized as Algorithm \ref{algo:SpaceAIG}.
\begin{algorithm}
	\caption{AIGA for BPDN}
	\label{algo:SpaceAIG}
	\KwIn{The received signal $\mathbf y$, the regularization parameter $\kappa$ and the maximal iteration number $T$.}
	\KwOut{\textit{a posteriori} mean $\bm\mu_0^t$ of $\mathbf h$.}
	\textbf{Initialization}:
	Set $t=0$, 
	$\bm\lambda_0^t = \mathbf 0$, 
	$\bm\Lambda_0^t = \infty$, 
	$\mathbf z^{t-1} = \mathbf y$, 
	$\hat\Lambda_0^t = 1$, 
	$\iota^t = 0$;\\
	\While{$t\leq T$}{
		Update $\hat\Lambda_0^{t+1} = \left(1+\frac{\iota^t}{\delta} (\hat\Lambda_0^t)^{-1}\right)^{-1} $;\\
		Update $\mathbf z^t
		= \mathbf y - \mathbf A^T
		(\bm\Lambda_0^t)^{-1} \bm\lambda_0^t 
		+ \frac{\iota^t}{\delta} \mathbf z^{t-1}$;\\
		Calculate $\hat{\bm\lambda}_0^{t+1} = \hat\Lambda_0^{t+1}( \mathbf A^T \mathbf z^t
		+      
		(\bm\Lambda_0^t)^{-1} \bm\lambda_0^t)$;\\
		Denote the ratio of elements in $\hat{\bm\lambda}_0^{t+1}$ whose absolute value is larger than $\kappa$ as $\iota^{t+1}$;\\
		Calculate $\bm\lambda_0^{t+1}$ and $\bm\Lambda_0^{t+1}$: 
		 If $|[\hat{\bm\lambda}_0^{t+1}]_n| > \kappa$  
		 then  
		 $[\bm\lambda_0^{t+1}]_n = \left(|[\hat{\bm\lambda}_0^{t+1}]_n| - \kappa \right)  \text{sign} \left(|[\hat{\bm\lambda}_0^{t+1}]_n|\right)$, 
		$[\bm\Lambda_0^{t+1}]_n = \hat\Lambda_0^{t+1}$; 
		 else  
		$[\bm\lambda_0^{t+1}]_n = 0$, $[\bm\Lambda_0^{t+1}]_n = \infty$;\\
		$t=t+1$;
	}
	When convergence or $t>T$, calculate the mean of the target point as
	\begin{IEEEeqnarray}{Cl}
		\bm\mu_0^t
		&=  (\bm\Lambda_0^t)^{-1} \bm\lambda_0^t .\notag
	\end{IEEEeqnarray}
\end{algorithm}

 \begin{figure}[htbp] 
	\centering
	\includegraphics[width=0.4\linewidth]{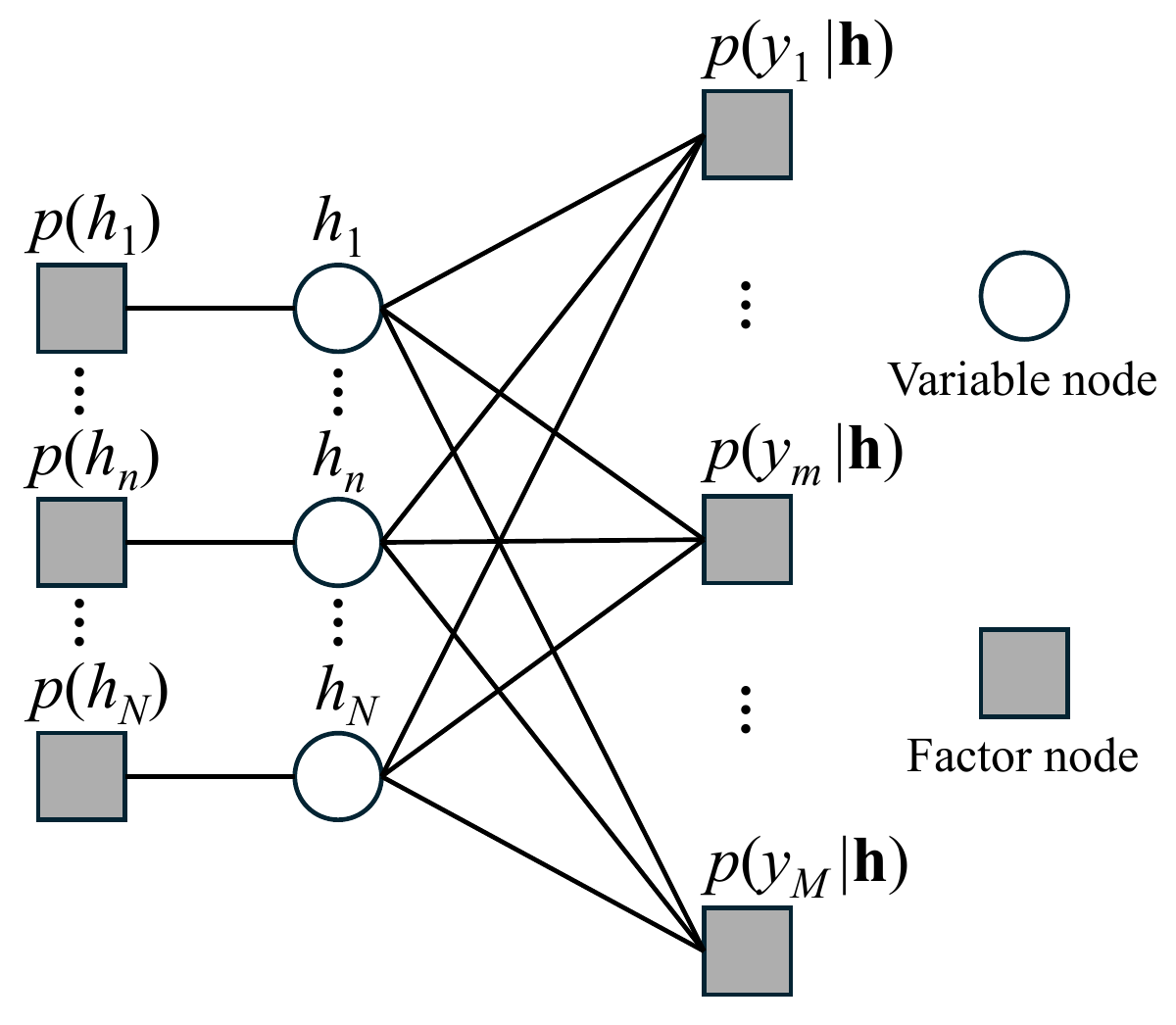}
	\caption{Factor graph of \eqref{eq:postupost}}
	\label{fig:factorGch4}
\end{figure}

\section{The equivalence of AMP and AIGA}
\label{sec:AMPeqIGA}
In this section, we give proof of the equivalence of the AMP algorithm and the AIGA.
Specifically, we present the relationship between the MP algorithm and the IGA, and the equivalence of the AMP algorithm and the AIGA.

\subsection{Review of MP and IGA}
We first review the MP algorithm. Based on Fig. \ref{fig:factorGch4}, the postulated \textit{a posteriori} PDF of $\mathbf h$ is represented as
\begin{IEEEeqnarray}{Cl}
	p( \mathbf{h}| \mathbf{y}) 
	&= \frac{1}{Z_p} 
	\prod\limits_{n=1}^N p_n(h_n)
	\prod\limits_{q=1}^Q \varphi_q(\mathbf h)\notag\\
	&\propto \prod\limits_{n=1}^N \exp\left\{-\beta\kappa|h_n|\right\}
	\prod\limits_{m=1}^M p(y_m|\mathbf h).
\end{IEEEeqnarray}
Then the messages in MP are addressed as
\begin{IEEEeqnarray}{Cl}
	\nu_{n\rightarrow m}^{t+1}(h_n)
	&\propto \exp\left\{-\beta\kappa|h_n|\right\}
	\prod_{m'\neq m}^M 
	\nu_{m'\rightarrow n}^t (h_n)
	\IEEEyesnumber\IEEEyessubnumber*
	\label{eq:AMPmess1}\\
	\nu_{m\rightarrow n}^t(h_n)
	&\propto \int p(y_m|\mathbf h)
	\prod\limits_{n'\neq n}^N 
	\nu_{n'\rightarrow m}^t(h_n) \,
	\mathrm d \mathbf h_{\backslash n}
	\label{eq:AMPmess2}
\end{IEEEeqnarray}
where $\mathbf h_{\backslash n}$ is $\mathbf h$ except $h_n$, 
$\nu_{n\rightarrow m}^{t+1}(h_n)$ is the message from the variable node $h_n$ to the factor node $p(y_m|\mathbf h)$, and
$\nu_{m\rightarrow n}^t(h_n)$ is the message in the opposite direction. 
Denote their means and variances as $\mu_{n\rightarrow m}^{t+1}$, $\tau_{n\rightarrow m}^{t+1}$
and $\mu_{m\rightarrow n}^t$, $\tau_{m\rightarrow n}^t$, respectively.
When $\beta \rightarrow \infty$, $\mu_{n\rightarrow m}^{t+1}$ and $\tau_{n\rightarrow m}^{t+1}$ are obtained by 
\begin{IEEEeqnarray}{Cl}
	\mu_{n\rightarrow m}^{t+1}
	&= \eta\left(r_{n\rightarrow m}^t, \Sigma_{n\rightarrow m}^t\right) 
	\IEEEyesnumber\IEEEyessubnumber*\\
	\tau_{n\rightarrow m}^{t+1}
	&= \Sigma_{n\rightarrow m}^t \eta'\left(r_{n\rightarrow m}^t, \Sigma_{n\rightarrow m}^t\right)
\end{IEEEeqnarray}
where
\begin{IEEEeqnarray}{Cl}
	r_{n\rightarrow m}^t
	&= \Sigma_{n \rightarrow m}^t \sum_{m' \neq m} \left(\tau_{m'\rightarrow n}^t \right)^{-1} \mu_{m' \rightarrow n}^t
	\IEEEyesnumber\IEEEyessubnumber*\\
	\Sigma_{n\rightarrow m}^t
	&= \left( \sum_{m' \neq m} \left(\tau_{m'\rightarrow n}^t \right)^{-1}\right)^{-1}
\end{IEEEeqnarray}
and $\eta(\mu,\tau)$, $\eta'(\mu,\tau)$ are defined as
\begin{IEEEeqnarray}{Cl}
	\eta(\mu,\tau) 
	&=
	\begin{cases}
		(|\mu|-\tau)  \text{sign}(\mu), \quad |\mu|>\tau\\
		0, \quad \, \text{otherwise}
	\end{cases}
	\IEEEyesnumber\IEEEyessubnumber*\\
	\eta'(\mu,\tau)
	&=
	\begin{cases}
		1, \quad |\mu|>\tau\\
		0, \quad \, \text{otherwise}.
	\end{cases}
\end{IEEEeqnarray}

Next, we review the IGA. 
Rewrite the auxiliary PDF as 
\begin{IEEEeqnarray}{Cl}
	p( \mathbf h; \bm\theta_m^t, \bm\Theta_m^t)
	&\propto p(\mathbf h;\tilde{\bm\theta}_m^{t},\tilde{\bm\Theta}_m^{t})p(y_m|\mathbf h)\notag\\ 
	&\propto \exp\bigg\{ -\frac{\beta}{2}\mathbf h^T \bm\Lambda_m^t\mathbf h 
	+ \beta\mathbf h^T \bm\lambda_m^t  - \frac{\beta}{2} \mathbf h^T\mathbf a_m\mathbf a_m^T\mathbf h 
	+ \beta\mathbf h^T \mathbf a_m y_m \bigg\}
\end{IEEEeqnarray}
where $p(\mathbf h;\tilde{\bm\theta}_m^{t},\tilde{\bm\Theta}_m^{t})$ is represented as 
\begin{IEEEeqnarray}{Cl}
	\label{eq:IGmess1}
	p(\mathbf h;\tilde{\bm\theta}_m^{t},\tilde{\bm\Theta}_m^{t})
	&\propto \exp\left\{-\frac{\beta}{2}\mathbf h^T\bm\Lambda_m^t\mathbf h 
	+ \beta \mathbf h^T \bm\lambda_m^t
	\right\}
	\IEEEeqnarraynumspace
\end{IEEEeqnarray}
and the elements of $\bm\lambda_m^t$ and $\bm\Lambda_m^t$ are the messages from variable nodes $h_1,\cdots,h_N$ to the factor node $y_m$. 

The $m$-projection of auxiliary point to target manifold is
\begin{IEEEeqnarray}{Cl}
	  p\left(\mathbf h;(\bm\theta_m^0)^t,(\bm\Theta_m^0)^t\right)  
	&\propto \exp\left\{-\frac{\beta}{2}\mathbf h^T(\bm\Lambda_m^0)^t\mathbf h 
	+ \beta\mathbf h^T (\bm\lambda_m^0)^t 
	\right\}\notag\\
	&= \exp\left\{-\frac{\beta}{2}\mathbf h^T\bm\Lambda_m^t\mathbf h 
	+ \beta\mathbf h^T \bm\lambda_m^t 
	-\frac{\beta}{2} \mathbf h^T \bm\Xi_m^t \mathbf h 
	+ \beta \mathbf h^T \bm\xi_m^t
	\right\}\notag\\
	&\propto p(\mathbf h;\tilde{\bm\theta}_m^{t},\tilde{\bm\Theta}_m^{t}) p(\mathbf h;\bm\xi_m^t,\bm\Xi_m^t)
\end{IEEEeqnarray}
where $p(\mathbf h;\bm\xi_m^t,\bm\Xi_m^t)$ is represented as
\begin{IEEEeqnarray}{Cl}
	\label{eq:IGmess2}
	p(\mathbf h;\bm\xi_m^t,\bm\Xi_m^t)
	\propto \exp\left\{-\frac{\beta}{2}\mathbf h^T\bm\Xi_m^t\mathbf h 
	+ \beta\mathbf h^T \bm\xi_m^t
	\right\}
\end{IEEEeqnarray}
which is the approximation of the likelihood function of $y_m$. 
The elements of $\bm\xi_m^t$ and $\bm\Xi_m^t$ are beliefs from the factor node $y_m$ to variable nodes $h_1,\cdots,h_N$.

\subsection{The relationship between MP and IGA}
\label{subsec:MP_IGA}
In this subsection, we present the relationship between the MP algorithm  and the IGA.

First, we analyze the relationship between $\nu_{m\rightarrow n}^t(h_n)$ and $p(\mathbf h;\bm\xi_m^t,\bm\Xi_m^t)$.
We have the following intermediate variables in the MP algorithm \cite{Donoho2010} 
\begin{IEEEeqnarray}{Cl}
	Z_{m\rightarrow n} ^t
	&= \sum\limits_{n'\neq n} a_{mn'}
	\mu_{n'\rightarrow m}^t
	\IEEEyesnumber\IEEEyessubnumber*\\
	T_{m \rightarrow n}^t
	&= \sum\limits_{n'\neq n} a_{mn'}^2
	\tau_{n'\rightarrow m}^t .
	\label{eq:Tmton}
\end{IEEEeqnarray}
According to \cite[Equations (5), (6)]{Donoho2010}, the messages from the factor node to the variable node in the MP algorithm  are
\begin{IEEEeqnarray}{Cl}
	\mu_{m\rightarrow n}^t
	&= \frac{y_m - Z_{m\rightarrow n}^t}
	{a_{mn}}
	\IEEEyesnumber\IEEEyessubnumber*\\
	\tau_{m\rightarrow n}^t
	&= \frac{1 + T_{m \rightarrow n}^t}
	{\beta a_{mn}^2} .
\end{IEEEeqnarray}

Then we focus on the beliefs of the IGA. Recall that  $r_m^t = 1+\mathbf a_m^T\bm\Lambda_m^{-1}\mathbf a_m = 1+\sum_n a_{mn}^2[\bm\Lambda_m^t]_{nn}^{-1}$, and $[\mathbf L_m]_{nn} = [\mathbf I \odot \mathbf a_m \mathbf a_m^T]_{nn}= a_{mn}^2$. 
The $n$-th diagonal element of the variance of $p(\mathbf h;\bm\xi_m^t,\bm\Xi_m^t)$ is 
\begin{IEEEeqnarray}{Cl}
	\label{eq:Xiinveqtaumn}
	\frac{1}{\beta}[\bm\Xi_m^t]_{nn}^{-1}
	&= \frac{1}{\beta}\left[r_m^t\mathbf L_m^{-1} - (\bm\Lambda_m^t)^{-1}\right]_{nn}\notag\\
	&= \frac{1 + \sum_{n'\neq n} a_{mn'}^2 [\bm\Lambda_m^t]_{n'n'}^{-1}}
	{\beta a_{mn}^2} .
\end{IEEEeqnarray}
Besides, the $n$-th element of parameter $\bm\xi_m^t$ in $p(\mathbf h;\bm\xi_m^t,\bm\Xi_m^t)$ can be simplified as
\begin{IEEEeqnarray}{Cl}
	\big[\bm\xi_m^t\big]_n
	&= \Big[\Big(r_m^t\mathbf I - \mathbf L_m (\bm\Lambda_m^t)^{-1}\Big)^{-1}    \left((\mathbf L_m-\mathbf a_m\mathbf a_m^T) (\bm\Lambda_m^t)^{-1} \bm\lambda_m^t + \mathbf a_m y_m\right)\Big]_n\notag\\
	&= \frac{a_{mn} \Big(y_m
		- \sum_{n'\neq n} a_{mn'}[\bm\Lambda_m^t]_{n'n'}^{-1} [\bm\lambda_m^t]_{n'} \Big) }
	{1 + \sum_{n'\neq n} a_{mn'}^2[\bm\Lambda_m^t]_{n'n'}^{-1}}.
\end{IEEEeqnarray}
Then, the $n$-th element of the mean of $p(\mathbf h;\bm\xi_m^t,\bm\Xi_m^t)$ is 
\begin{IEEEeqnarray}{Cl}
	[\bm\Xi_m^t]_{nn}^{-1}[\bm\xi_m^t]_n
	&= \frac{ y_m
		- \sum_{n'\neq n} a_{mn'} [\bm\Lambda_m^t]_{n'n'}^{-1} [\bm\lambda_m^t]_{n'} }
	{a_{mn}} .
\end{IEEEeqnarray}
It can be observed that $\mu_{n\rightarrow m}^t$, $\tau_{n\rightarrow m}^t$ and  $[\bm\lambda_m^t]_n$, $[\bm\Lambda_m^t]_{nn}^{-1}$ are the messages in $\nu_{n\rightarrow m}^{t+1}(h_n)$ and $p(\mathbf h;\tilde{\bm\theta}_m^{t},\tilde{\bm\Theta}_m^{t})$, respectively.
The parameters $[\bm\Xi_m^t]_{nn}^{-1}[\bm\xi_m^t]_n$ and $\frac{1}{\beta} [\bm\Xi_m^t]_{nn}^{-1}$ of $p(\mathbf h;\bm\xi_m^t,\bm\Xi_m^t)$ in the IGA, has the same form as $\mu_{m\rightarrow n}^t$ and $\tau_{m\rightarrow n}^t$ of $\nu_{m\rightarrow n}^t(h_n)$ in the MP algorithm, respectively. 
However, the mean and variance of $\nu_{n\rightarrow m}^{t+1}(h_n)$ and $p(\mathbf h;\tilde{\bm\theta}_m^{t},\tilde{\bm\Theta}_m^{t})$ are not exactly equal, where $\mu_{n\rightarrow m}^t$ and $\tau_{n\rightarrow m}^t$ are obtained by calculating the expectations of the non-Gaussian PDF $\nu_{n\rightarrow m}^{t+1}(h_n)$, while $[\bm\lambda_m^t]_{n}$ and $[\bm\Lambda_m^t]_{nn}^{-1}$ are obtained by \eqref{eq:lLambdamtand1}. Thus, the new IG approach is simpler than the MP algorithm.
If we set up an extra PDF like \eqref{eq:extrapd0} for the independent terms of every auxiliary PDF, and calculate $\bm\lambda_m^t$ and $\bm\Lambda_m^t$ as the method in \eqref{eq:uUpsilons}, then this new IG approach is exactly equivalent to the MP algorithm.




We present the correspondence of parameters in the MP algorithm and the IGA in Table \ref{tab:MPeqIG}.
\begin{table}[htbp]
	\centering
	\caption{The correspondence of parameters in MP and IGA}
	\label{tab:MPeqIG}
	\begin{tabular}{cc}
		\hline
		MP & IGA\\
		\hline
		$\prod_n\nu_{n\rightarrow m}(h_n)$ 
		& $p(\mathbf h;\tilde{\bm\theta}_m^{t},\tilde{\bm\Theta}_m^{t})$ \\
		$\mu_{n\rightarrow m}$ 
		& $[{\bm\Lambda}_m]_{nn}^{-1} [{\bm\lambda}_m]_n $\\
		$\frac{1}{\beta}\tau_{n\rightarrow m}$ 
		& $\frac{1}{\beta}[{\bm\Lambda}_m]_{nn}^{-1}$\\
		$\prod_n\nu_{m\rightarrow n}(h_n)$ 
		& $p(\mathbf h;\bm\xi_m^t,\bm\Xi_m^t)$\\
		$\mu_{m\rightarrow n}$ 
		& $[\bm\Xi_m]_{nn}^{-1}[\bm\xi_m]_n$\\
		$\tau_{m\rightarrow n}$ 
		& $\frac{1}{\beta}[\bm\Xi_m]_{nn}^{-1}$\\
		\hline
	\end{tabular}
\end{table}

\subsection{Equivalence of AMP and AIGA}
In this subsection, we present the equivalence of the AMP algorithm and the AIGA.
The AMP algorithm is 
\begin{IEEEeqnarray}{Cl}
	\gamma^t 
	&= \frac{\kappa + \gamma^{t-1}}{\delta} 
	\left\langle \eta' (\mathbf A^T \mathbf z^{t-1} + \bm\mu^{t-1},\kappa + \gamma^{t-1} ) \right\rangle 
	\IEEEyesnumber\IEEEyessubnumber*
	\label{eq:AMPgammaup}\\
	\mathbf z^t
	&= \mathbf y - \mathbf A \bm\mu^t \notag\\
	&\quad 
	+ \frac{1}{\delta} \mathbf z^{t-1} \left\langle \eta'(\mathbf A^T \mathbf z^{t-1} + \bm\mu^{t-1},\kappa + \gamma^{t-1}) \right\rangle
	\label{eq:AMPztup}\\
	\bm\mu^{t+1} 
	&= \eta (\mathbf A^T \mathbf z^t + \bm\mu^t,\kappa + \gamma^t )
	\label{eq:AMPmutup}
	\IEEEeqnarraynumspace
\end{IEEEeqnarray}
where $\langle \cdot \rangle$ is the average of all element of a vector, $\gamma^t = \kappa \tau_z^t$, and \eqref{eq:AMPgammaup} is obtained by organizing the following equation
\begin{IEEEeqnarray}{Cl}
	\label{eq:AMPtauzup}
	\tau_z^t 
	&= \frac{1 + \tau_z^{t-1}}{\delta} 
	\left\langle \eta'(\mathbf A^T \mathbf z^{t-1} + \bm\mu^{t-1},\kappa(1 + \tau_z^{t-1})) \right\rangle .
\end{IEEEeqnarray}

The equivalence of the AMP algorithm and the AIGA is given in the following theorem.
\begin{theorem}
	\label{th: theorem AMP eq AIGA}
	The AMP algorithm and the AIGA are equivalent. 
	Specifically, 
	\eqref{eq:AMPtauzup} and \eqref{eq:IGLambda0up} are equivalent, 
	\eqref{eq:AMPztup} and \eqref{eq:IGztup} are equivalent, 
	\eqref{eq:AMPmutup} and \eqref{eq:IGlamhatup}, \eqref{eq:IGLamhatup} are equivalent.
\end{theorem}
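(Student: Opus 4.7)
The plan is to set up a bijective correspondence between the state variables of the two recursions and then match each update step by step. The dictionary I would use is $\bm\mu^t \leftrightarrow (\bm\Lambda_0^t)^{-1}\bm\lambda_0^t$, $\hat\Lambda_0^t \leftrightarrow 1/(1+\tau_z^{t-1})$, and $\mathbf z^t$ shared between the two algorithms; these imply $\gamma^{t-1} = \kappa\tau_z^{t-1} = \kappa((\hat\Lambda_0^t)^{-1}-1)$ and hence $\kappa+\gamma^{t-1} = \kappa/\hat\Lambda_0^t$. With this identification, the AMP threshold aligns exactly with the implicit thresholding scale in the AIGA soft-thresholding $\upsilon,\Upsilon$, which uses the fixed constant $\kappa$ applied to the rescaled argument $\hat{\bm\lambda}_0$.

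The key linking fact, which I would prove first, is that the AIGA scalar $\iota^t$ equals the AMP average $\langle \eta'(\mathbf A^T\mathbf z^{t-1}+\bm\mu^{t-1},\kappa(1+\tau_z^{t-1}))\rangle$. This follows immediately from $\hat{\bm\lambda}_0^t = \hat\Lambda_0^t(\mathbf A^T\mathbf z^{t-1}+\bm\mu^{t-1})$: the inequality $|[\hat{\bm\lambda}_0^t]_n|>\kappa$ holds iff $|[\mathbf A^T\mathbf z^{t-1}+\bm\mu^{t-1}]_n|>\kappa/\hat\Lambda_0^t = \kappa(1+\tau_z^{t-1})$, which is precisely the event on which $\eta'=1$.

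With $\iota^t$ identified, I would verify the three update pairs. For \eqref{eq:AMPtauzup} versus \eqref{eq:IGLambda0up}, inverting the AIGA relation gives $(\hat\Lambda_0^{t+1})^{-1}-1 = \iota^t(1+\tau_z^{t-1})/\delta$, i.e., $\tau_z^t = (1+\tau_z^{t-1})\iota^t/\delta$, which is \eqref{eq:AMPtauzup}. For \eqref{eq:AMPztup} versus \eqref{eq:IGztup}, the two right-hand sides coincide term-by-term after substituting $\bm\mu^t=(\bm\Lambda_0^t)^{-1}\bm\lambda_0^t$ and the identified $\iota^t$. For \eqref{eq:AMPmutup} versus \eqref{eq:IGlamhatup}--\eqref{eq:IGLamhatup}, I would compute $(\bm\Lambda_0^{t+1})^{-1}\bm\lambda_0^{t+1}$ componentwise: on the active branch $|[\hat{\bm\lambda}_0^{t+1}]_n|>\kappa$, using $\hat{\bm\lambda}_0^{t+1} = \hat\Lambda_0^{t+1}(\mathbf A^T\mathbf z^t+\bm\mu^t)$ and $[\bm\Lambda_0^{t+1}]_{nn}=\hat\Lambda_0^{t+1}$ gives $(|[\mathbf A^T\mathbf z^t+\bm\mu^t]_n|-\kappa/\hat\Lambda_0^{t+1})\,\text{sign}([\mathbf A^T\mathbf z^t+\bm\mu^t]_n)$, which is exactly $\eta([\mathbf A^T\mathbf z^t+\bm\mu^t]_n,\kappa(1+\tau_z^t))$; on the inactive branch, both sides produce $0$.

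The main obstacle is not analytical but bookkeeping: one must carefully track the time-index shift between $\hat\Lambda_0^{t+1}$ and $\tau_z^t$, between $\iota^t$ and the $\eta'$ average evaluated at step $t-1$, and interpret the degenerate AIGA branch $[\bm\Lambda_0^{t+1}]_{nn}=\infty$, $[\bm\lambda_0^{t+1}]_n=0$ as yielding $[(\bm\Lambda_0^{t+1})^{-1}\bm\lambda_0^{t+1}]_n = 0$ so as to match $\eta=0$. The initializations then align cleanly: $\bm\mu^0=\mathbf 0$ from $\bm\lambda_0^0=\mathbf 0$, $\tau_z^{-1}=0$ from $\hat\Lambda_0^0=1$, and $\mathbf z^{-1}=\mathbf y$ on both sides. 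A simple induction on $t$ then closes the equivalence between the two iterations.
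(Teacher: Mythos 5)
Your proposal is correct and follows essentially the same route as the paper: an induction on $t$ built on the correspondence $\bm\mu^t \leftrightarrow (\bm\Lambda_0^t)^{-1}\bm\lambda_0^t$, $(\hat\Lambda_0^t)^{-1} \leftrightarrow 1+\tau_z^{t-1}$, shared $\mathbf z^t$, with the key step being that the thresholding events coincide (your identification of $\iota^t$ with $\langle\eta'\rangle$ is exactly the paper's $\iota_1^t=\iota_2^t$ argument), followed by the same componentwise check that $\zeta=\Upsilon^{-1}\upsilon$ reproduces $\eta$ on both branches. The only cosmetic difference is in the base case, where the paper merely requires $(\hat\Lambda_0^0)^{-1}=1+\tau_z^{-1}$ for some admissible choice, while you fix $\hat\Lambda_0^0=1$, $\tau_z^{-1}=0$; both are fine.
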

\begin{proof}
	The proof is provided in Appendix \ref{appendice: theorem AMP eq AIGA}.
\end{proof}
	
We present the equivalent parameters of the AMP algorithm and the AIGA in Table \ref{tab:AMPeqAIG}.
\begin{table}[htbp]
	\centering
	\caption{The equivalence of parameters of AMP and AIGA}
	\label{tab:AMPeqAIG}
	\begin{tabular}{cc}
		\hline
		AMP & AIGA\\
		\hline
		$1+\tau_z^t$ & $\big(\hat\Lambda_0^{t+1}\big)^{-1}$\\
		$\mathbf z_{\text{AMP}}^t$ & $\mathbf z_{\text{AIGA}}^t$\\
		$\bm\mu^{t+1}$  &  $(\bm\Lambda_0^{t+1})^{-1} \bm\lambda_0^{t+1}$ \\
		\hline
	\end{tabular}
\end{table}

\section{Simulation Results}
\label{sec:Sim}

The normalized mean-squared error (NMSE) is used as the performance metric for the estimation, which is defined as
\begin{IEEEeqnarray}{Cl}
	\label{eq:NMSE}
	\text{NMSE} = 
	\frac{1}{L} \sum\limits_{l=1}^L
	\frac{\|\bar{\mathbf h}_l - \mathbf h_l\|_2^2}
	{\|\mathbf h_l\|_2^2}
\end{IEEEeqnarray}
where $\mathbf h_l$ is the original signal, and $\bar{\mathbf h}_l$ is the estimated signal, $L$ is the number of samples. We set $L=100$ in the simulations. 

\begin{figure}[htbp] 
	\centering
	\includegraphics[width=0.75\linewidth]{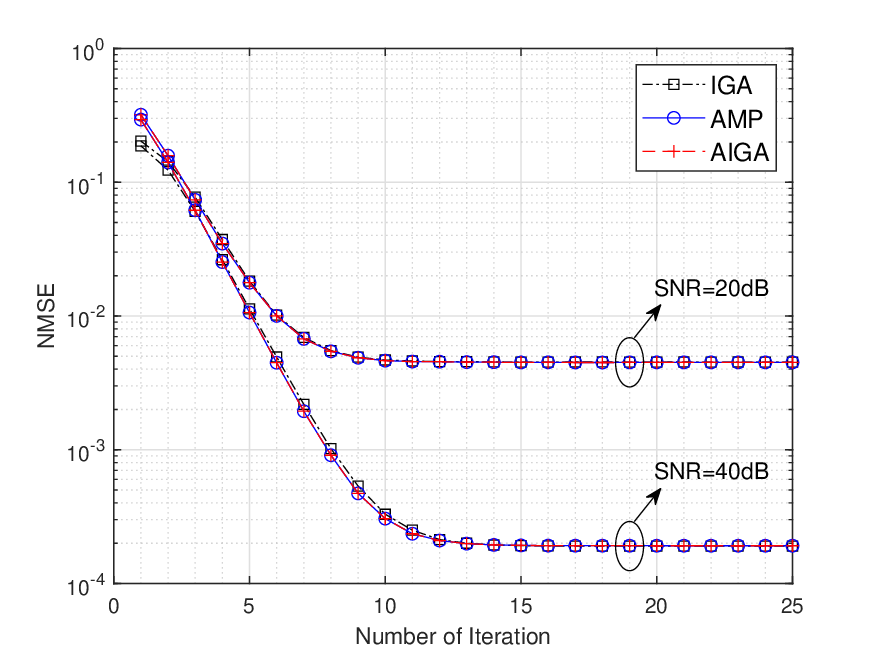}
	\caption{Convergence performance of IGA, AMP and AIGA, where $M=512$, $N=1024$, $\rho = 0.05$, $\text{SNR}=\{20,40\}\text{dB}$ and $\kappa=0.05$.}
	\label{fig:equality_AIGA_Real}
\end{figure}
\begin{figure}[htbp] 
	\centering
	\includegraphics[width=0.75\linewidth]{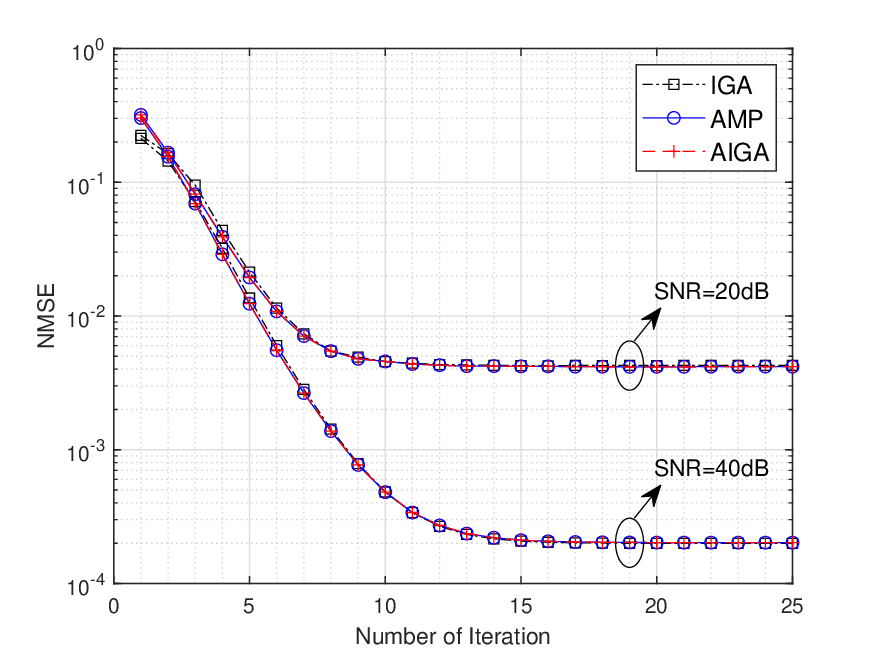}
	\caption{Convergence performance of IGA, AMP and AIGA, where $M=256$, $N=512$, $\rho = 0.05$, $\text{SNR}=\{20,40\}\text{dB}$ and $\kappa=0.05$.}
	\label{fig:equality_AIGA_Real2}
\end{figure}

We set the measurement matrix $\mathbf A \in \mathbb R^{M \times N}$ to an \textit{i.i.d.} Gaussian matrix with $[\mathbf A]_{mn}\sim\mathcal N(0,1/M),\forall m,n$. 
The signal $\mathbf h \in \mathbb R^{N \times 1}$ has \textit{i.i.d.} elements, where $[\mathbf h]_n \sim\mathcal {BG}(0,\rho),\forall n$. 
Figs.~\ref{fig:equality_AIGA_Real} and \ref{fig:equality_AIGA_Real2} show the convergence performance of the IGA, the AMP algorithm and the AIGA. 
The update of beliefs or messages may need damping to enhance convergence, but the damping coefficients for the update are different in three algorithms. 
Thus, to fairly compare these three algorithms, we do not use the damping coefficinets to update the parameters  in the simulations. 
Simulation results show that the NMSE curves of the AMP algorithm and the AIGA coincide exactly,
which verifies the equivalence of the AMP algorithm and the AIGA.

\section{Conclusion}
\label{sec:Con}
In this paper, a general information geometry framework for non-Gaussian prior standard linear regression problems is proposed. The proposed framework offers an intuitive and unified view to understand the standard linear regression problem.  The IGA and AIGA are then derived based on the proposed framework.  
We show that the IGA is strongly related to the MP algorithm, and prove that the AIGA is equivalent to the AMP algorithm. 
These results provide a new perspective and understanding for the AMP algorithm and open up new possibilities for improving stochastic reasoning methods.

\appendices

\section{Detailed Derivation of Gaussian Approximation of $p_{d}(h;\hat{\theta},\hat{\Theta}) $}
\label{appendice: theorem Gaussian approx}
According to \cite{Rangan2012}, when $\beta \rightarrow  \infty$, the mean $\mu_{p_{d}}$ of $p_{d}(h;\hat\theta,\hat\Theta)$ can be calculated with Laplace method of integration as
\begin{IEEEeqnarray}{Cl}
	 \lim\limits_{\beta \rightarrow \infty}
	 \mathbb E_{p_{d}}\{h\}  
	&=\lim\limits_{\beta \rightarrow \infty}\int h \frac{1}{Z} \exp\left\{-\beta\left(\kappa|h| + \frac{1}{2}\hat\Lambda h^2-h\hat\lambda\right)\right\} \, \mathrm d h\notag\\
	&=\mathop{\arg\min}\limits_h \left\{ \kappa|h|+ \frac{1}{2}\hat\Lambda h^2-h\hat\lambda\right\}.
\end{IEEEeqnarray}
Rewrite the absolute value $|h|$ as $(h^2)^{\frac{1}{2}}$, and define $g(h)$ as
\begin{IEEEeqnarray}{Cl}
	g(h)
	=\kappa(h^2)^{\frac{1}{2}} + \frac{1}{2}\hat\Lambda h^2-h\hat\lambda.
\end{IEEEeqnarray}
When $h=0$, we have that $g(h)=0$. When $h \neq 0$, the derivative of $g(h)$ is
\begin{IEEEeqnarray}{Cl}
	\frac{\partial g(h)}{\partial h}
	&= \hat\Lambda h-\hat\lambda + \frac{\kappa}{2}(h^2)^{-\frac{1}{2}}  2h\notag\\
	&= \hat\Lambda h - \hat\lambda + \kappa  \text{sign}(h) .
\end{IEEEeqnarray}
We have $\hat\Lambda > 0$ because the covariance $\hat\Sigma$ of $p(h;\hat\theta,\hat\Theta)$ satisfies $\hat\Sigma = -\frac{1}{2}\hat\Theta^{-1} = \frac{1}{\beta}\hat\Lambda^{-1} > 0$.
Since $\kappa >0$ and $\hat\Lambda > 0$, $g(h)$ is the sum of two convex functions and has a unique minimum point.
When $h \neq 0$, the zero of $\frac{\partial g(h)}{\partial h}$ satisfies
\begin{IEEEeqnarray}{Cl}
	\hat\lambda = |\hat\lambda|  \text{sign}(\hat\lambda)
	&= \hat\Lambda h + \kappa  \text{sign}(h) \notag\\
	&= (\hat\Lambda |h|+\kappa)  \text{sign}(h)
\end{IEEEeqnarray}
and we further have 
\begin{IEEEeqnarray}{Cl}
	|h|&= \hat\Lambda^{-1}(|\hat\lambda|-\kappa)\notag\\
	\text{sign}(h)&=\text{sign}(\hat\lambda).
\end{IEEEeqnarray}
Define the soft threshold function $\zeta(\hat\lambda,\hat\Lambda)$ as
\begin{IEEEeqnarray}{Cl}
	\zeta(\hat\lambda,\hat\Lambda) =
	\begin{cases}
		\hat\Lambda^{-1}(|\hat\lambda|-\kappa)  \text{sign}(\hat\lambda), \quad |\hat\lambda|>\kappa\\
		0, \quad \, \text{otherwise}
	\end{cases}
\end{IEEEeqnarray}
then the mean of  $p_{d}(h;\hat\theta,\hat\Theta)$ satisfies $\mu_{p_{d}} 
= \zeta(\hat\lambda,\hat\Lambda)$. 
Besides, the derivative of $\zeta(\hat\lambda,\hat\Lambda)$ \textit{w.r.t} $\hat\lambda$ is
\begin{IEEEeqnarray}{Cl}
	\zeta'(\hat\lambda,\hat\Lambda) =
	\begin{cases}
		\hat\Lambda^{-1}, \quad |\hat\lambda|>\kappa\\
		0, \quad \, \text{otherwise}
	\end{cases}
\end{IEEEeqnarray}
which is related to the variance $\Sigma_{p_{d}}$ of $p_{d}(h;\hat\theta,\hat\Theta)$. 

Next we calculate the variance of $p_{d}(h;\hat\theta,\hat\Theta)$. 
	\label{lm: lemma var of P}
	Let $q(h)$ be an arbitrary bounded and non-negative function, and define a PDF
	\begin{IEEEeqnarray}{Cl}
		\mathcal P(h) &= \frac{q(h) p(h;\hat\theta,\hat\Theta)}{\int q(h)p(h;\hat\theta,\hat\Theta)\, \, \mathrm d h}
	\end{IEEEeqnarray}
	which is a product of $q(h)$ and a Gaussian PDF $p(h;\hat\theta,\hat\Theta)$. 
	Its mean and variance are denoted as
	\begin{IEEEeqnarray}{Cl}
		\mathbb E_{\mathcal P}\{h\}&=\int h\mathcal P(h)\, \mathrm d h
		\IEEEyesnumber\IEEEyessubnumber*\\
		\mathbb V_{\mathcal P}\{h\}&=\int h^2\mathcal P(h)\, \mathrm d h - \mathbb E_{\mathcal P}\{h\}^2 .
	\end{IEEEeqnarray}
By using the calculus of variations \cite{gelfand2000calculus}, the derivative of $\mathbb E_{\mathcal P}\{h\}$ \textit{\textit{w.r.t}.} $\hat\theta$ is computed by 
\begin{IEEEeqnarray}{Cl}
	\frac{\partial \mathbb E_{\mathcal P}\{h\}}{\partial \hat\theta}&=\frac{\partial}{\partial \hat\theta}\frac{\int h q(h)p(h;\hat\theta,\hat\Theta)\, \mathrm d h}{\int q(h)p(h;\hat\theta,\hat\Theta)\, \mathrm d h}\IEEEnonumber*\\
	&=\frac{\int h^2q(h)p(h;\hat\theta,\hat\Theta)\, \mathrm d h \int q(h)p(h;\hat\theta,\hat\Theta)\, \mathrm d h}  {(\int q(h)p(h;\hat\theta,\hat\Theta)\, \mathrm d h)^2}\\
	&\ \ -\frac{\int h q(h)p_{\mathcal N}(h;\hat\theta,\hat\Theta)\, \mathrm d h \int hq(h)p(h;\hat\theta,\hat\Theta)\, \mathrm d h}  {(\int q(h)p(h;\hat\theta,\hat\Theta)\, \mathrm d h)^2}\\
	&=\int h^2\mathcal P(h)\, \mathrm d h
	-\left(\int h\mathcal P(h)\, \mathrm d h\right)^2\\
	&=\mathbb V_{\mathcal P}\{h\}\IEEEyesnumber
\end{IEEEeqnarray}
which means $\mathbb V_{\mathcal P}\{h\} = \frac{\partial \mathbb E_{\mathcal P}\{h\}}{\partial \hat\theta}$. 


Since $p_{d}(h;\hat\theta,\hat\Theta) \propto \exp\left\{-\beta \kappa|h|\right\}p(h;\hat\theta,\hat\Theta)$ and $\hat\theta=\beta\hat\lambda$,
we have
\begin{IEEEeqnarray}{Cl}
	\frac{\partial \mathbb E_{p_{d}}\{h\}}{\partial \hat\lambda} 
	= \beta\frac{\partial \mathbb E_{p_{d}}\{h\}}{\partial \hat\theta}
	= \beta\mathbb V_{p_{d}}\{h\} 
\end{IEEEeqnarray}
which means
\begin{IEEEeqnarray}{Cl}
	\mathbb V_{p_{d}}\{h\}
	= \frac{1}{\beta}\zeta'(\hat\lambda,\hat\Lambda).
\end{IEEEeqnarray}
In conclusion, the mean and variance of $p_{d}(h;\hat\lambda,\hat\Lambda)$ are given by
\begin{IEEEeqnarray}{Cl}
	\mu_{p_{d}} 
	&= \mathbb E_{p_{d}}\{h\} 
	= \zeta(\hat\lambda,\hat\Lambda)
	\IEEEyesnumber\IEEEyessubnumber*\\
	\Sigma_{p_{d}} 
	&= \mathbb V_{p_{d}}\{h\}
	= \frac{1}{\beta} \zeta'(\hat\lambda,\hat\Lambda) .
\end{IEEEeqnarray}
Define a Gaussian PDF $p(h;\theta,\Theta)$ which has the same mean and variance as $p_{d}(h;\hat\theta,\hat\Theta)$, \textit{i.e.}, $\mu_p = \mu_{p_{d}}$, $\Sigma_p = \Sigma_{p_{d}}$. 
Further define
\begin{IEEEeqnarray}{Cl}
	\upsilon(\hat\lambda,\hat\Lambda) 
	&= \big(\zeta'(\hat\lambda,\hat\Lambda)\big)^{-1} \zeta(\hat\lambda,\hat\Lambda) \notag\\ 
	&=
	\begin{cases}
		(|\hat\lambda|-\kappa)  \text{sign}(\hat\lambda), \quad |\hat\lambda|>\kappa\\
		0, \quad \, \text{otherwise}
	\end{cases}
	\IEEEyesnumber\IEEEyessubnumber*\\
	\Upsilon(\hat\lambda,\hat\Lambda) 
	&= \big(\zeta'(\hat\lambda,\hat\Lambda)\big)^{-1} \notag\\
	&=
	\begin{cases}
		\hat\Lambda, \quad |\hat\lambda|>\kappa\\
		\infty, \quad \, \text{otherwise}
	\end{cases}
\end{IEEEeqnarray}
Then we have that the natural parameters of $p\big(h;\theta,\Theta\big)$ are
$\theta = \beta\lambda$, 
$\Theta = -\frac{\beta}{2}\Lambda$, 
where 
$\lambda = \upsilon(\hat\lambda,\hat\Lambda)$, 
$\Lambda = \Upsilon(\hat\lambda,\hat\Lambda)$.


\section{Proof of Theorem \ref{th: theorem beliefs IGA}}
\label{appendice: theorem beliefs IGA}
The natural parameters of the auxiliary point $p(\mathbf h;\bm\theta_m,\bm\Theta_m)$ are
\begin{IEEEeqnarray}{Cl}
	\bm\theta_m = \beta\left(\bm\lambda_m+\mathbf a_m y_m\right)
	\IEEEyesnumber\IEEEyessubnumber*\\
	\bm\Theta_m = - \frac{\beta}{2}\left(\bm\Lambda_m + \mathbf a_m\mathbf a_m^T\right) .
\end{IEEEeqnarray}
Transforming the natural parameters of auxiliary points into expectation parameters, we get
\begin{IEEEeqnarray}{Cl}
	\bm\mu_m
	&= - \frac{1}{2} \bm\Theta_m^{-1} \bm\theta_m\notag\\
	&= \left(\bm\Lambda_m + \mathbf a_m\mathbf a_m^T\right)^{-1}
	\left(\bm\lambda_m + \mathbf a_m y_m\right)
	\IEEEyesnumber\IEEEyessubnumber*\\
	\bm\Sigma_m
	&= - \frac{1}{2} \bm\Theta_m^{-1}
	= \frac{1}{\beta}\left(\bm\Lambda_m + \mathbf a_m\mathbf a_m^T\right)^{-1} .
\end{IEEEeqnarray}
By $m$-projecting the auxiliary point to the target manifold, we obtain the mean and covariance of the $m$-projection as
\begin{IEEEeqnarray}{ClCl}
	\bm\mu_m^0&=\bm\mu_m,\quad
	&\bm\Sigma_m^0&=\mathbf I\odot\bm\Sigma_m .
\end{IEEEeqnarray}
The natural parameters of the $m$-projection is further obtained as
\begin{IEEEeqnarray}{Cl}
	\subnumberinglabel{SpacetThetam0}
	\bm\theta_m^0
	&= \left(\bm\Sigma_m^0\right)^{-1}\bm\mu_m^0\notag\\
	&= \left(\mathbf I \odot \bm\Theta_m^{-1}\right)^{-1} \bm\Theta_m^{-1} \bm\theta_m\\
	\bm\Theta_m^0
	&= - \frac{1}{2}\left(\bm\Sigma_m^0\right)^{-1} \notag\\
	&= \left(\mathbf I\odot\bm\Theta_m^{-1}\right)^{-1} .
\end{IEEEeqnarray}
Using the Sherman-Morrison formula \cite{sherman1950adjustment} to calculate the matrix inversion in \eqref{SpacetThetam0} yields
\begin{IEEEeqnarray}{Cl}
	\bm\Theta_m^{-1} 
	&= - \frac{2}{\beta} \left(\bm\Lambda_m + \mathbf a_m\mathbf a_m^T\right)^{-1}\notag\\
	&= - \frac{2}{\beta} \left(\bm\Lambda_m^{-1}-\frac{\bm\Lambda_m^{-1}\mathbf a_m\mathbf a_m^T\bm\Lambda_m^{-1}}{1 + \mathbf a_m^T\bm\Lambda_m^{-1}\mathbf a_m}\right)\notag\\
	&= - \frac{2}{\beta} \left(\bm\Lambda_m^{-1}-r_m^{-1}\bm\Lambda_m^{-1}\mathbf a_m\mathbf a_m^T\bm\Lambda_m^{-1}\right)
\end{IEEEeqnarray}
where $r_m=1+\mathbf a_m^T\bm\Lambda_m^{-1}\mathbf a_m$. 
The $m$-projection $\bm\Theta_m^0$ in  \eqref{SpacetThetam0} is then obtained as
\begin{IEEEeqnarray}{Cl}
	\bm\Theta_m^0 &= \left(\mathbf I\odot \bm\Theta_m^{-1}\right)^{-1}\notag\\
	&= - \frac{\beta}{2} \left(\bm\Lambda_m^{-1} 
	- r_m^{-1}\bm\Lambda_m^{-1} 
	\left(\mathbf I\odot\mathbf a_m\mathbf a_m^T\right)
	\bm\Lambda_m^{-1}\right)^{-1}\notag\\
	&= - \frac{\beta}{2} \left(\bm\Lambda_m^{-1} 
	- r_m^{-1}\bm\Lambda_m^{-1}\mathbf L_m\bm\Lambda_m^{-1}\right)^{-1}\notag\\
	&\overset{(c)}{=} - \frac{\beta}{2} \left(\bm\Lambda_m 
	- \left(- r_m\mathbf L_m^{-1} + \bm\Lambda_m^{-1}\bm\Lambda_m\bm\Lambda_m^{-1}\right)^{-1}
	\right)\notag\\
	&= - \frac{\beta}{2} \left(\bm\Lambda_m+\left(r_m\mathbf L_m^{-1}-\bm\Lambda_m^{-1}\right)^{-1}\right)
\end{IEEEeqnarray}
where $\mathbf L_m = \mathbf I\odot\mathbf a_m\mathbf a_m^T$ and $\overset{(c)}{=}$ is from the Woodbury identity \cite{higham2002accuracy}.
From $\bm\Theta_m^0
= -\frac{\beta}{2}\bm\Lambda_m^0 $, we have that the belief $\bm\Xi_m$ is
\begin{IEEEeqnarray}{Cl}
	\label{eq:SpaceXi}
	\bm\Xi_m
	&= \bm\Lambda_m^0
	- \bm\Lambda_m\notag\\ 
	&= \left(r_m\mathbf L_m^{-1}-\bm\Lambda_m^{-1}\right)^{-1} .
\end{IEEEeqnarray}

Next we calculate the $m$-projection $\bm\theta_m^0$ in \eqref{SpacetThetam0}. With $r_m=1+\mathbf a_m^T\bm\Lambda_m^{-1}\mathbf a_m$, it follows that
\begin{IEEEeqnarray}{Cl}
	\bm\theta_m^0
	&=(\mathbf I \odot \bm\Theta_m^{-1})^{-1} \bm\Theta_m^{-1} \bm\theta_m\notag\\
	&=\beta\left(\mathbf I - r_m^{-1}\mathbf L_m\bm\Lambda_m^{-1}\right)^{-1}
	(\mathbf I - r_m^{-1}\mathbf a_m\mathbf a_m^T\bm\Lambda_m^{-1})
	\left(\bm\lambda_m+\mathbf a_m y_m\right)\notag\\
	&=\beta\left(\mathbf I - r_m^{-1}\mathbf L_m\bm\Lambda_m^{-1}\right)^{-1}
	\left((\mathbf I - r_m^{-1}\mathbf a_m\mathbf a_m^T\bm\Lambda_m^{-1})\bm\lambda_m + (\mathbf a_m y_m - r_m^{-1}\mathbf a_m\mathbf a_m^T\bm\Lambda_m^{-1}\mathbf a_m y_m)\right)\notag\\
	&=\beta\left(\mathbf I - r_m^{-1}\mathbf L_m\bm\Lambda_m^{-1}\right)^{-1}
	\left((\mathbf I - r_m^{-1}\mathbf a_m\mathbf a_m^T\bm\Lambda_m^{-1})\bm\lambda_m 
	+ (1 - r_m^{-1}\mathbf a_m^T \bm\Lambda_m^{-1}\mathbf a_m)\mathbf a_m y_m\right)\notag\\
	&=\beta\left(\mathbf I - r_m^{-1}\mathbf L_m\bm\Lambda_m^{-1}\right)^{-1}
	\left((\mathbf I - r_m^{-1}\mathbf a_m\mathbf a_m^T\bm\Lambda_m^{-1})\bm\lambda_m + r_m^{-1}\mathbf a_m y_m\right) .
\end{IEEEeqnarray}
From $\bm\theta_m^0
= \beta\bm\lambda_m^0 $, we have that the belief $\bm\xi_m$ is
\begin{IEEEeqnarray}{Cl}
	\label{eq:Spacexi}
	\bm\xi_m
	&= \bm\lambda_m^0 - \bm\lambda_m\notag\\
	&=\left(\mathbf I - r_m^{-1}\mathbf L_m\bm\Lambda_m^{-1}\right)^{-1}
	\left((\mathbf I - r_m^{-1}\mathbf a_m\mathbf a_m^T\bm\Lambda_m^{-1})\bm\lambda_m + r_m^{-1}\mathbf a_m y_m\right) - \bm\lambda_m\notag\\
	&=\left(\mathbf I - r_m^{-1}\mathbf L_m\bm\Lambda_m^{-1}\right)^{-1}
	\left((\mathbf I - r_m^{-1}\mathbf a_m\mathbf a_m^T\bm\Lambda_m^{-1})\bm\lambda_m 
	- (\mathbf I - r_m^{-1}\mathbf L_m\bm\Lambda_m^{-1})\bm\lambda_m + r_m^{-1}\mathbf a_m y_m\right)\notag\\
	&= \left(r_m\mathbf I - \mathbf L_m\bm\Lambda_m^{-1} \right)^{-1}
	\left((\mathbf L_m-\mathbf a_m\mathbf a_m^T)\bm\Lambda_m^{-1}\bm\lambda_m + \mathbf a_m y_m\right) .
\end{IEEEeqnarray}

\section{Proof of Theorem \ref{th: theorem AMP eq AIGA}}
\label{appendice: theorem AMP eq AIGA}
We use mathematical induction to prove the theorem.
Suppose that at the $t$-th iteration,
\eqref{eq:AMPtauzup} and \eqref{eq:IGLambda0up} are equivalent, 
\eqref{eq:AMPztup} and \eqref{eq:IGztup} are equivalent, 
\eqref{eq:AMPmutup} and \eqref{eq:IGlamhatup}, \eqref{eq:IGLamhatup} are equivalent. 
Under the above assumptions, we have that $\bm\mu^t = (\bm\Lambda_0^t)^{-1} \bm\lambda_0^t$, and $\mathbf z^{t-1}$ in the AMP algorithm and $\mathbf z^{t-1}$ in the AIGA are equal, \textit{i.e.}, $\mathbf z_{\text{AMP}}^{t-1} = \mathbf z_{\text{AIGA}}^{t-1}$. 

According to \cite{Donoho2010}, $\tau_z^t$ in the AMP algorithm is an approximation to $T_{m \rightarrow n}^t$ , while $T_{m \rightarrow n}^t$ is defined in \eqref{eq:Tmton}.
From \eqref{eq:Xiinveqtaumn} we have $[\bm\Xi_m^t]_{nn}^{-1}$ and $(1 + T_{m \rightarrow n}^t)/a_{mn}^2$ correspond to the messages from factor node $y_m$ to variable node $h_n$ in the IGA and the MP algorithm, respectively. 
From $\hat\Lambda_0^t = M\Xi_0^{t-1}$, $\Xi_0^{t-1}$ being an approximation to $[\bm\Xi_m^{t-1}]_{nn}$, $\tau_z^{t-1}$ being an approximation to $T_{m \rightarrow n}^{t-1}$, and $a_{mn}^2
= O(1/M)$, we have 
\begin{IEEEeqnarray}{Cl}
	\frac{1}{M} [\bm\Xi_m^{t-1}]_{nn}^{-1}
	\rightarrow (M \Xi_0^{t-1})^{-1} 
	= (\hat\Lambda_0^t)^{-1}
\end{IEEEeqnarray}
and
\begin{IEEEeqnarray}{Cl}
	\frac{1}{M} \big(1 + T_{m \rightarrow n}^{t-1}\big)
	 \frac{1}{a_{mn}^2}
	\rightarrow 1 + \tau_z^{t-1}.
\end{IEEEeqnarray}
Thus $(\hat\Lambda_0^t)^{-1}$ and $1+\tau_z^{t-1}$ are both approximations to the same message as described in Section \ref{subsec:MP_IGA}. 
Assuming \eqref{eq:AMPtauzup} is equivalent to \eqref{eq:IGLambda0up} is the same as assuming $(\hat\Lambda_0^t)^{-1} = 1+\tau_z^{t-1}$ at the $t$-th iteration.

First, we present the equivalence of \eqref{eq:AMPtauzup} and \eqref{eq:IGLambda0up} at the $(t+1)$-th iteration. 
Rewrite \eqref{eq:IGLambda0up} as
\begin{IEEEeqnarray}{Cl}
	(\hat\Lambda_0^{t+1})^{-1}
	&= 1 + \frac{1}{M} 
	{\rm tr} \big((\bm\Lambda_0^t)^{-1} \big)
\end{IEEEeqnarray}
Note that $\bm\Lambda_0^t
= \Upsilon (\hat{\bm\lambda}_0^t, \hat{\Lambda}_0^t)$, 
$\hat{\bm\lambda}_0^t =\hat{\Lambda}_0^t \big( \mathbf A^T \mathbf z^{t-1}
+ (\bm\Lambda_0^{t-1})^{-1} {\bm\lambda}_0^{t-1} \big)$
and $\delta = M/N$.
Denote the ratio of elements in $\hat{\bm\lambda}_0^t$ whose absolute value is larger than $\kappa$ as $\iota_1^t$, then $\frac{1}{M}{ \rm tr} (\bm\Lambda_0^t)^{-1} = \frac{\iota_1^t}{\delta} (\hat\Lambda_0^t)^{-1}$, and
\begin{IEEEeqnarray}{Cl}
	(\hat\Lambda_0^{t+1})^{-1}
	= 1 + \frac{\iota_1^t}{\delta} 
	(\hat\Lambda_0^t)^{-1}
\end{IEEEeqnarray}
Rewrite \eqref{eq:AMPtauzup} as
\begin{IEEEeqnarray}{Cl}
	&\quad 1 + \tau_z^t  = 1 + \frac{1 + \tau_z^{t-1}}{\delta} 
	\left\langle \eta'\left(\mathbf A^T \mathbf z^{t-1} + \bm\mu^{t-1},\kappa(1 + \tau_z^{t-1})\right) \right\rangle
	\IEEEeqnarraynumspace
\end{IEEEeqnarray}
Similarly, denote the ratio of elements in $\mathbf A^T \mathbf z^{t-1} + \bm\mu^{t-1}$ whose absolute value is larger than $\kappa(1 + \tau_z^{t-1})$ as $\iota_2^t$, then
\begin{IEEEeqnarray}{Cl}
	1 + \tau_z^t 
	&= 1 + \frac{\iota_2^t}{\delta} (1 + \tau_z^{t-1})
\end{IEEEeqnarray}
Next, we verify whether $\iota_1$ and $\iota_2$ are consistent.
$\iota_1$ and $\iota_2$ are determined by the thresholds and arguments of the functions $\eta'$ and $\Upsilon$, respectively.
Suppose that $| [ \hat{\bm\lambda}_0^t ]_n | > \kappa$ for $n$. 
From $\hat\Lambda_0^t > 0$, we obtain that 
$\left| [ \mathbf A^T \mathbf z^{t-1}
+ (\bm\Lambda_0^{t-1})^{-1} \bm\lambda_0^{t-1} ]_n \right| > \kappa  (\hat\Lambda_0^t)^{-1}$, 
Then when $(\hat\Lambda_0^t)^{-1} = 1+\tau_z^{t-1}$, we have $\iota_1^t = \iota_2^t \triangleq \iota^t$, and thus $(\hat\Lambda_0^{t+1})^{-1} = 1+\tau_z^t$. 
Therefore, with the assumptions holding, \eqref{eq:AMPtauzup} and \eqref{eq:IGLambda0up} are equivalent at the $(t+1)$-th iteration.

Next, we present the equivalence of \eqref{eq:AMPztup} and \eqref{eq:IGztup} under the above assumptions at the $(t+1)$-th iteration.
Based on the above derivation, we can get $\iota_1^t = \iota_2^t \triangleq \iota^t$.
Therefore, with the assumptions holding, $\mathbf z_{\text{AMP}}^t = \mathbf z_{\text{AIGA}}^t$, so \eqref{eq:AMPztup} and \eqref{eq:IGztup} are equivalent at the $(t+1)$-th iteration.

Then, we present the equivalence of \eqref{eq:AMPmutup} and \eqref{eq:IGlamhatup}, \eqref{eq:IGLamhatup} at the $(t+1)$-th iteration. 
Based on the above derivation, the functions $\eta$, $\eta'$ have the same thresholds as $\upsilon$, $\Upsilon$ and $\zeta$, $\zeta'$. From $\upsilon(\hat\lambda,\hat\Lambda) = \big(\zeta'(\hat\lambda,\hat\Lambda)\big)^{-1} \zeta(\hat\lambda,\hat\Lambda)$ 
and 
$\Upsilon(\hat\lambda,\hat\Lambda) =  \big(\zeta'(\hat\lambda,\hat\Lambda)\big)^{-1}$, 
we get
$\zeta(\hat\lambda,\hat\Lambda) =  \big(\Upsilon(\hat\lambda,\hat\Lambda)\big)^{-1} \upsilon(\hat\lambda,\hat\Lambda)$, and thus 
\begin{IEEEeqnarray}{Cl}
	&\quad (\bm\Lambda_0^{t+1})^{-1} \bm\lambda_0^{t+1}  = \zeta \left(\hat\Lambda_0^{t+1} \mathbf A^T \mathbf z^t
	+ \hat\Lambda_0^{t+1}     
	(\bm\Lambda_0^t)^{-1} \bm\lambda_0^t, \hat\Lambda_0^{t+1} \right)
\end{IEEEeqnarray}
From the assumption we know that $(\bm\Lambda_0^t)^{-1} \bm\lambda_0^t = \bm\mu^t$. 
Suppose the $n$-th element of  $(\bm\Lambda_0^{t+1})^{-1} \bm\lambda_0^{t+1}$ is not zero, and denote $\mathbf a_n = [\mathbf A]_{:,n}$, 
then from $\hat\Lambda_0^{t+1} > 0$ and the definition of function $\zeta$ in AIGA, we have
\begin{IEEEeqnarray}{Cl}
	 \quad \left[ (\bm\Lambda_0^{t+1})^{-1} \bm\lambda_0^{t+1} \right]_n  
	&= \zeta \left(\hat\Lambda_0^{t+1} \mathbf a_n^T \mathbf z^t
	+ \hat\Lambda_0^{t+1}     
	\left[(\bm\Lambda_0^t)^{-1} \bm\lambda_0^t\right]_n, \hat\Lambda_0^{t+1} \right) \notag\\
	&= \big(\hat\Lambda_0^{t+1}\big)^{-1} \left( \left| \hat\Lambda_0^{t+1} \mathbf a_n^T \mathbf z^t
	+ \hat\Lambda_0^{t+1}     
	[\bm\mu^t]_n \right| - \kappa \right)   \text{sign}(\hat\Lambda_0^{t+1} \mathbf a_n^T \mathbf z^t
	+ \hat\Lambda_0^{t+1}     
	[\bm\mu^t]_n) \notag\\
	&= \left( \left| \mathbf a_n^T \mathbf z^t
	+ [\bm\mu^t]_n \right| - \kappa (\hat\Lambda_0^{t+1})^{-1} \right)  \text{sign}(\mathbf a_n^T \mathbf z^t
	+ [\bm\mu^t]_n)
\end{IEEEeqnarray}
and in the AMP algorithm we have 
\begin{IEEEeqnarray}{Cl}
	[ \bm\mu^{t+1}]_n 
	&= \eta \left(\mathbf a_n^T \mathbf z^t
	+ [\bm\mu^t]_n, \kappa(1 + \tau_z^t) \right) \notag\\
	&= \left( \left| \mathbf a_n^T \mathbf z^t
	+ [\bm\mu^t]_n \right| - \kappa(1 + \tau_z^t) \right)  \text{sign}(\mathbf a_n^T \mathbf z^t
	+ [\bm\mu^t]_n).
\end{IEEEeqnarray}
From $(\hat\Lambda_0^{t+1})^{-1} = 1+\tau_z^t$, we know $\left[ (\bm\Lambda_0^{t+1})^{-1} \bm\lambda_0^{t+1} \right]_n = [ \bm\mu^{t+1}]_n$. 
Therefore, with the assumptions holding, \eqref{eq:AMPmutup} and \eqref{eq:IGlamhatup}, \eqref{eq:IGLamhatup} are equivalent at the $(t+1)$-th iteration.

Finally, when $t=0$, in the AMP algorithm we initialize $\bm\mu^t = \mathbf 0$, $\mathbf z_{\text{AMP}}^{t-1} = \mathbf y$, and chose appropriate $\gamma^{t-1}$, while in the AIGA we initialize $\bm\lambda_0^t = \mathbf 0$, $\bm\Lambda_0^t = \infty$, $\mathbf z_{\text{AIGA}}^{t-1} = \mathbf y$, and chose appropriate $\hat\Lambda_0^t$. 
The initial values of the two algorithms satisfy the relationships $\bm\mu^t = (\bm\Lambda_0^t)^{-1} \bm\lambda_0^t$ and $\mathbf z_{\text{AMP}}^{t-1} = \mathbf z_{\text{AIGA}}^{t-1}$. 
Also notice that $\gamma^{t-1} = \kappa\tau_z^{t-1}$, thus we make $(\hat\Lambda_0^t)^{-1} = 1+\tau_z^{t-1}$ at $t=0$.
For example, let $\hat\Lambda_0^t = 1/2$ and $\tau_z^{t-1} = 1$.
Thus, the initial conditions of the AMP algorithm and the AIGA are the same, \textit{i.e.}, they are equivalent at $t=0$. 
In conclusion, the AMP algorithm and the AIGA are equivalent.

\bibliographystyle{IEEEtran}
\bibliography{IEEEabrv,this_reference}

\end{document}